\setlist[enumerate]{itemsep=0mm}
\setlist[itemize]{itemsep=0mm,topsep=1pt,partopsep=0pt}
\newif\ifdraft\draftfalse
\newcommand\modedraft[1]{#1}
\newcommand\todo[1]{{\color{purple}[\textbf{To do:} #1]}}
\newcommand\bmcomment[1]{\marginpar[{\color{blue}\small\dbend}]{\color{blue}\small\dbend}{\footnotesize \color{blue}[#1 - \textbf{Bastien}]}}
\newcommand\amcomment[1]{\marginpar[{\color{OliveGreen}\small\dbend}]{\color{OliveGreen}\small\dbend}{\footnotesize
    \color{OliveGreen}[#1 - \textbf{Nello}]}}
\newcommand\rbcomment[1]{\marginpar[{\color{Orange}\small\dbend}]{\color{Orange}\small\dbend}{\footnotesize \color{Orange}[#1 - \textbf{Raphael}]}}
\newcommand\modedraft[1]{}
\newcommand\todo[1]{}
\newcommand\bmcomment[1]{}
\newcommand\amcomment[1]{}
\newcommand\rbcomment[1]{}
\title{Quantified CTL with imperfect information\footnote{This project
    has received funding from the European Union's Horizon 2020
    research and innovation programme under the Marie Sklodowska-Curie
    grant agreement No 709188.}}
\newcommand\UElogo{%
\begin{tikzpicture}[remember picture,overlay]
\node[anchor=south,yshift=4.2cm,xshift=2cm] at (current page.south) {\includegraphics[height=2.5em]{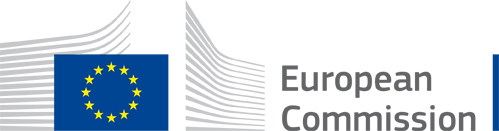}};
\end{tikzpicture}%
}
\author[1]{Rapha\"el Berthon}
\author[2]{Bastien Maubert}
\author[3]{Aniello Murano}
\affil[1]{\'Ecole Normale Supérieure de Rennes, France\\
  \texttt{raphael.berthon@ens-rennes.fr}}
\affil[2]{Universit\`a degli Studi di Napoli Federico II, Italy\\
  \texttt{bastien.maubert@gmail.com}}
\affil[3]{Universit\`a degli Studi di Napoli Federico II, Italy\\
  \texttt{murano@na.infn.it}}
\authorrunning{R. Berthon, B. Maubert and A. Murano} 
\subjclass{F.4.1 Mathematical Logic}
\keywords{Temporal logics, formal verification, imperfect information, automata,
 strategic reasoning, MSO}
\theoremstyle{plain}
\begin{document}
\maketitle
\UElogo

\begin{abstract}
  Quantified \CTL (\QCTL) is a well-studied temporal logic
 that extends \CTL with quantification over
  atomic propositions. It has recently come to the fore as a
  powerful intermediary framework to study logics
  for  strategic reasoning.
We extend it to include imperfect information  
by parameterising  quantifiers with an observation that
  defines how well they observe the model, thus  constraining their behaviour. 
%
We consider two different semantics, one related to the
notion of \emph{no memory}, the other to \emph{perfect recall}.  
We study the expressiveness of our logic, and show that it coincides
with \MSO for the first semantics and with \MSO with equal level for
the second one.
We establish that the model-checking problem is \PSPACE-complete
for the first semantics. While it is undecidable for the second one, we
identify a  syntactic fragment, defined by a notion of
hierarchical formula, which we prove to be decidable thanks to an
automata-theoretic approach. 

\end{abstract}

\begin{section}{Introduction}

  Temporal logic is a powerful framework widely used in formal
  system-design and verification~\cite{CGP02,Pnu77}.  It allows
  reasoning over the temporal evolution of a system, without referring
  explicitly to the elapsing of time. One of the most significant
  contributions of the field is \emph{model checking}, which allows to verify
  system correctness by checking whether a mathematical model of the
  system satisfies a temporal logic formula expressing its desired
  behaviour~\cite{CE81,CGP02,KVW00,KVW01}.
  %
  \\ \indent
  Depending on the view of the  nature of time, two types of
  temporal logics are mainly considered.  In \emph{linear-time
    temporal logics} such as {\LTL}~\cite{Pnu77} time is treated as
  if each moment in time had a unique possible future. Conversely, in
  \emph{branching-time temporal logics} such as \CTL~\cite{CE81} and
  \CTLs~\cite{EH86}, each moment in time may split into various
  possible futures;  existential and universal
  quantifiers then allow expressing properties of either one or all the
  possible futures. While {\LTL} is suitable to express path properties, \CTL is
  more appropriate for state-based ones, and \CTLs for both.  These
  logics are ``easy-to-use'', can express important system
  properties such as \emph{liveness} or \emph{safety}, enjoy
good  fundamental theoretical properties such as invariance under
  tree-unwinding of models, and come with reasonable complexities for the
  main related decision problems.  For instance, the model-checking and
  satisfiability problems for \CTLs are \PSPACE-Complete~\cite{EL85} and
  \2EXPTIME-Complete~\cite{VS85}, respectively.
  %
  \\ \indent
  Along the years, \CTLs has been extended in a number of ways in
  order to verify the behavior of a broad variety of
  systems.  In multi-agent open-system verification, \emph{Alternating-Time Temporal Logic} (\ATLs),
  introduced by Alur, Henzinger, and Kupferman~\cite{AHK02}, is
  particularly successful.  This
  generalization of \CTLs  replaces path
  quantifiers with \emph{strategic modalities}, that is modalities
  over teams of agents that describe the ability to cooperate  in
  order to achieve a goal against adversaries.
  \ATLs model checking is a very active area of research and it has been
  studied in several domains, including communication
  protocols~\cite{Hoek03ATELstudialogica}, fair exchange
  protocols~\cite{Kremer03fairexchange,Jamroga12fairness-lncs}, and
  agent-oriented programs~\cite{Dastani10programs-aamas}.
  The complexity of the problem has been extensively studied in a
  multitude  of papers, and algorithms have been implemented in
  tools~\cite{Lomuscio06mcmas}.
 Remarkably, \ATLs has inspired fresh and more powerful logics such as
 \emph{Strategy Logic} (\SL)~\cite{CHP10,MMV10b,MMPV14},  \emph{\ATLs with
strategy context} (\ATLSsc)~\cite{BLLM09,DLM10},
 \emph{\ATLs with Irrevocable
 strategies} (I\ATLs)~\cite{AGJ07} and \emph{Memoryful \ATLs}
 (m\ATLs)~\cite{MMV10a}.
 These logics are progressively overtaking \ATLs; in
 particular this is the case for \SL
 as it
 can can express fundamental game-theoretic concepts such as
 Nash Equilibrium and Subgame Perfect Equilibrium~\cite{MMPV14}.
  \\ \indent
  In the landscape of temporal logics, another breakthrough
  contribution comes from \emph{Quantified \CTLs} (\QCTLs), which
  extends \CTLs with the possibility to quantify over atomic
  propositions~\cite{Sis83,Kup95,KMTV00,french2001decidability,DBLP:journals/corr/LaroussinieM14}.  \QCTLs turns out
  to be very expressive (indeed, it is equivalent to Monadic
  Second-Order Logic, \MSO for short) and was usefully
  applied in a number of scenarios. Recently it has come to the fore
  as a convenient and uniform intermediary logic to easily obtain 
  algorithms for \ATLSsc, \SL, as well as related
  formalisms~\cite{DBLP:journals/corr/LaroussinieM14,LLM10,MMPV14}. Indeed, strategies 
   can be represented by  atomic propositions
  labelling the execution tree of the game structure under study, and
strategy quantification can thus be expressed by means of propositional
  quantifications. As
  a remark, quantification in \QCTLs can be interpreted either on Kripke structures
  (\emph{structure semantics}) or their execution tree (\emph{tree
    semantics}), allowing for the encoding of memoryless or perfect-recall strategies, respectively. This difference impacts also the complexity of the
  related decision problems: for instance, moving from structure to tree
  semantics, model checking jumps from \PSPACE to non-elementary. 
  %
  \\ \indent
  In game theory and open-system verification 
  an important body of work has been devoted to \emph{imperfect
    information}, which refers to settings in which players have
  partial information about the moves taken by the
  others~\cite{CLMM14,DT11,JM14,KV97,PR89}.  This is a common
  situation in real-life scenarios where 
 players 
 have to act without having all the relevant information at hand. In
  computer science this situation occurs for example when some system's
  variables are internal/private~\cite{Reif84}.  
  Imperfect information is usually modelled by 
  indistinguishability relations over the states of the
  game. During a play,  some players may
  not know precisely in which state they are, and therefore they cannot base their
  actions on the exact current situation: they must
  choose their actions uniformly over
  indistinguishable states~\cite{KV97}.
  %
  \\ \indent
  This uniformity constraint 
 deeply impacts  the complexity of decision
  problems. It is well known that multi-player games with
  imperfect information are   computationally hard, in general 
 undecidable~\cite{peterson2001lower}, 
  and to retain positive complexity results one needs to
  restrict players' capabilities, by bounding their memory of
  past moves~\cite{CT11} or putting some hierarchical order over
  their observational power~\cite{PR89}. Unfortunately, most of the 
  approaches exploited under full observation  are not appropriate for
  imperfect information. In  particular this is the case of \QCTLs,
  unless opportunely adapted. In this paper we work in this direction
  by incorporating in \QCTLs the essence of imperfect
  information, that is the uniformity constraint on choices. We believe
  it may provide a uniform
  framework to obtain new results on logics for strategic
  reasoning under imperfect information, as does \QCTLs
  in   the perfect information setting.
  \vspace{0.20truecm}
  \\ \indent
  {\bf Our contribution.}  We introduce \QCTLsi, an opportune
  extension of \QCTLs that integrates the central feature of imperfect
  information, \ie, uniformity constraint on choices.  
We add internal structure to the states of the models, much like in
  Reif's multiplayer
  game structures~\cite{peterson2001lower} or distributed
  systems~\cite{halpern1989complexity}, and
  we parameterise
  propositional quantifiers   with observations that define what
  portions of the  states a quantifier can
  ``observe''. The semantics is adapted to capture the idea
  of quantifications on atomic propositions being made with  partial
  observation. Like in~\cite{DBLP:journals/corr/LaroussinieM14}, we consider both
  structure and tree semantics. 
  \\ \indent
We study the expressive power of \QCTLsi. By using the same argument as for
\QCTLs~\cite{DBLP:journals/corr/LaroussinieM14}, we first show that \QCTLsi and \QCTLi are equally
expressive for both semantics. Then we
prove that for the structure semantics, these logics are no more expressive
than \QCTL, and thus coincide with \MSO. Finally we  show that
under tree semantics \QCTLi is expressively equivalent to \MSO extended with
the equal level predicate (\MSOeql, see \cite{elgot-rabin66,lauchli1987monadic,thomas-msoeqlevel}).
  \\ \indent
Concerning the model-checking problem  we first prove
that under structure semantics it is \PSPACE-complete for both \QCTLsi
and \QCTLi, like \QCTL.
Under tree semantics, undecidability follows from the
equivalence with \MSOeql. However we identify a decidable syntactic
fragment, consisting of those formulas in which nested quantifiers
 have hierarchically ordered observations, innermost ones
observing more than outermost ones. We call such formulas
\emph{hierarchical formulas}. Interestingly, a decidability result for
Quantified $\mu$-Calculus with partial observation
\cite{pinchinat2005decidable} uses a similar syntactic
restriction. This logic is very close to ours, but orthogonal:
 while our tree
semantics relies on a synchronous perfect-recall notion of imperfect
information, theirs is asynchronous.
This hierarchical restriction is also related to  decidability results
for games with imperfect information \cite{peterson2002decision,DBLP:conf/atva/BerwangerMB15} and
distributed synthesis~\cite{kupermann2001synthesizing}.
Our decision procedure relies on automata constructions involving the
\emph{narrowing} operation introduced by Kupferman and Vardi in
\cite{kupferman1999church} for distributed synthesis. We believe that
our choice of modelling imperfect information by means of local
states 
eases greatly the use of automata techniques to tackle imperfect information.
Finally, our result provides new decidability results for \ATLSsc
with imperfect information (not presented here), and we trust it will
find applications in other logics, such as \SL with imperfect
information.
  \\ \indent
  {\bf Plan.} In Section~\ref{sec-prelim} we recall Kripke structures and trees,
  and the syntax and semantics of \QCTLs. We then present \QCTLsi in Section~\ref{sec-QCTL-imp-inf}, we study
  its expressiveness in Section~\ref{sec-expressivity} and its
  model-checking problem in Section~\ref{sec-modelchecking}. We
  conclude and discuss future work in Section~\ref{sec-conclusion}.



%
%
%
%

%
%

\end{section}


\section{Preliminaries}
\label{sec-prelim}
Let $\Sigma$ be an alphabet. A \emph{finite} (resp. \emph{infinite}) \emph{word} over $\Sigma$ is an element
of $\Sigma^{*}$ (resp. $\Sigma^{\omega}$). The empty word is
classically noted $\epsilon$, and
$\Sigma^{+}=\Sigma^{*}\setminus\{\epsilon\}$. The \emph{length} of a
  word is $|w|\egdef 0$ if $w$ is the empty word $\epsilon$, if $w=w_{0}w_{1}\ldots
w_{n}$ is a finite non-empty word then $|w|\egdef n+1$, and for an infinite word $w$ we let $|w|\egdef
\omega$. Given a word $w$ and $0\leq i,j\leq |w|-1$, we let $w_{i}$ be the
letter at position $i$ in $w$ and $w[i,j]$ be the subword of $w$ that starts
at position $i$ and ends at position $j$.
If $w$ is infinite, we let $w^{i}\egdef w[i,\omega]$.
We write $w\pref w'$ if $w$ is a prefix of $w'$, and $\FPref{w}$ is
the set of finite prefixes of word $w$.
Finally, for $n\in\setn$ we let $[n]\egdef\{1,\ldots,n\}$.



\subsection{Kripke structures and trees}

Let $\AP$ be a countably infinite set of
 \emph{atomic propositions} and let $\APf\subset\AP$ be a finite subset.
 
\begin{definition}
  A \emph{Kripke structure} over $\APf$ is a tuple
  $\KS=(\setstates,\relation,\lab)$ where
  $\setstates$
  is a set
  of \emph{states},
  $\relation\subseteq\setstates\times\setstates$ is a
  left-total\footnote{\ie, for all $\state\in\setstates$, there exists
    $\state'$ such that $(\state,\state')\in\relation$.}
  \emph{transition relation} and $\lab:\setstates\to 2^{\APf}$ is a
  \emph{\labeling function}. 
\end{definition}

A \emph{pointed Kripke structure} is a
  pair $(\KS,\state)$ where $\state\in\KS$, and the
  \emph{size} $|\KS|$ of a Kripke structure $\KS$ is its number of states.
A \emph{path} in a structure   $\KS=(\setstates,\relation,\lab)$  is
an infinite word $\spath\in\setstates^{\omega}$
 such that for all $i\in\setn$,
$(\spath_{i},\spath_{i+1})\in \relation$. For 
$\state\in\setstates$, we let $\Paths(\state)$ be the set of all
paths that start in $\state$.  
A \emph{finite path} is a finite non-empty prefix of a path.


We now define (infinite) trees. In many works, trees are defined as prefixed-closed sets of
words
with the empty word $\epsilon$ as root. Here trees  represent unfoldings of
 Kripke
structures, and we find it more convenient to see a node as a sequence of
states and the root as the initial state, hence the
following definition, where $\Dirtree$ is a finite set: 

\begin{definition}
\label{def-tree}
An \emph{$\Dirtree$-tree} $\tree$ 
 is a
nonempty set of words $\tree\subseteq \Dirtree^+$ such that:
\begin{itemize}
  \item\label{p-root} there exists $\racine\in\Dirtree$,  called the
    \emph{root} of $\tree$, such that each
    $\noeud\in\tree$ starts with $\racine$;
  \item if $\noeud\cdot\dir\in\tree$ and $\noeud\neq\epsilon$, then
    $\noeud\in\tree$, and
  \item if $\noeud\in\tree$ then there exists $\dir\in\Dirtree$ such that $\noeud\cdot\dir\in\tree$.
\end{itemize}
\end{definition} 

The elements of a tree $\tree$ are called \emph{nodes}.  
  If 
 $\noeud\cdot\dir \in \tree$, we say that $\noeud\cdot\dir$ is a \emph{child} of
 $\noeud$. An $\Dirtree$-tree is \emph{full} if every node $\noeud$ has a child
 $\noeud\cdot\dir$ for each $\dir\in\Dirtree$.
 The \emph{depth} of a node $\noeud$ is $|\noeud|$.
Similarly to Kripke structures, a \emph{\tpath} is an infinite sequence of nodes $\tpath=\noeud_0\noeud_1\ldots$
such that for all $i\in\setn$, $\noeud_{i+1}$ is a child of
$\noeud_i$,
and $\tPaths(\noeud)$ is the set of \tpaths
 that start in node $\noeud$. 
An \emph{$\APf$-\labeled $\Dirtree$-tree}, or
\emph{$(\APf,\Dirtree)$-tree} for short, is a pair
$\ltree=(\tree,\lab)$, where $\tree$ is an $\Dirtree$-tree called the
\emph{domain} of $\ltree$ and
$\lab:\tree \rightarrow 2^{\APf}$ is a \emph{\labeling}.
For a labelled tree $\ltree=(\tree,\lab)$ and an atomic
proposition $p\in\AP$, we define the \emph{$p$-projection of $\ltree$}
as the labelled tree
$\proj{\ltree}\;\egdef(\tree,\proj{\lab})$, where for each
$\noeud\in\tree$, $\proj{\lab}\!(\noeud)\egdef \lab(\noeud)\setminus
\{p\}$. For a set of trees $\lang$, we let $\proj{\lang}\;\egdef\{\proj{\ltree}\;\mid\ltree\in\lang\}$.

\begin{definition}[Tree unfoldings]
  \label{sec-unfoldings}
  Let $\CKS=(\setstates,\relation,\lab)$ be a Kripke structure over $\APf$, and let $\state\in\setstates$. 
  The \emph{tree-unfolding of $\CKS$ from $\state$} is the 
  $(\APf,\setstates)$-tree $\unfold{\state}=(\tree,\lab')$, where
    $\tree$ is the set
    of all finite  paths that start in $\state$, and
    for every $\noeud\in\tree$,
    $\lab'(\noeud)=\lab(\last(\noeud))$.
\end{definition}


\subsection{\QCTLs, syntax and semantics}
\label{sec-QCTL}

We recall the syntax of \QCTLs, as well as both the structure and tree semantics.
\begin{definition}
  The syntax of \QCTLs is defined by the following grammar:
  \begin{align*}
  \phi\egdef &\; p \mid \neg \phi \mid \phi\ou \phi \mid \E \psi \mid
  \existsp{} \phi\\
    \psi\egdef &\; \phi \mid \neg \psi \mid \psi\ou \psi \mid \X \psi \mid
  \psi \until \psi
\end{align*}
where $p\in\AP$. Formulas of type $\phi$ are called \emph{state
  formulas}, those of type $\psi$ are called \emph{path formulas}, and
 \QCTLs consists of state formulas. 
\end{definition}

Like in \cite{DBLP:journals/corr/LaroussinieM14} we consider two different semantics, the 
\emph{structure semantics} and the \emph{tree
  semantics}: in the former formulas are evaluated directly on the
structure, while in the latter the structure is first unfolded into an
infinite tree. In the first case, quantifying over 
 $p$  means choosing a truth value for $p$ in each state of
the structure, while in the second case it is possible to choose a
different truth value for $p$ in each finite path of the structure. 


\subsubsection{Structure semantics}

A \QCTLs state (resp. path) formula is evaluated in a state
(resp. path) of a Kripke structure.
To define the semantics of quantifications over propositions, the
following definition is handy.

\begin{definition}
  \label{def-pequiv-struct}
  For $p\in\AP$, two structures $\KS=(\setstates,\relation,\lab)$ and
  $\KS'=(\setstates',\relation',\lab')$ are \emph{equivalent modulo $p$},
  written $\KS\Pequiv\KS'$, if $\setstates=\setstates'$, $\relation
  =\relation'$ and for each state $\state\in\setstates$,
  $\lab(\state)\setminus\{p\}=\lab'(\state)\setminus\{p\}$.
  This definition also applies to labelled trees seen as infinite Kripke structures.
\end{definition}

 The satisfaction relation $\modelss$ for the structure semantics is
defined inductively as follows, where   $\KS=(\setstates,\relation,\lab)$ is a Kripke structure, 
$\state$ is a state and $\spath$ is a path in $\KS$:
\[
\begin{array}{lclclcl}
  \CKS,\state\modelss p & \mbox{if} & p\in\lab(\state) & \makebox[\widthof{blablablabla}]{} &
    \CKS,\state\modelss \neg \phi &
    \mbox{if} & \CKS,\state\not\modelss \phi \\ 
        \CKS,\state\modelss  \phi \ou \phi'&
    \mbox{if} & \CKS,\state \modelss \phi \mbox{ or
    }\CKS,\state\modelss \phi' \\
    \CKS,\state\modelss \E\psi &
    \mbox{if} & \multicolumn{5}{l}{\mbox{there exists }\spath\in\Paths(\state) \mbox{ such
      that }\CKS,\spath\modelss \psi} \\
        \CKS,\state\modelss \existsp{} \phi &
    \mbox{if} & \multicolumn{5}{l}{\mbox{there exists }\CKS'\Pequiv[p]\CKS \mbox{ such that
    }\CKS',\state\models\phi}\\
      \CKS,\spath\modelss \phi & \mbox{if} &
      \CKS,\spath_{0}\modelss\phi & \makebox[\widthof{blablablabla}]{} &
    \CKS,\spath\modelss \neg \psi &
    \mbox{if} & \CKS,\spath\not\modelss \psi \\
        \CKS,\spath\modelss  \psi \ou \psi'&
    \mbox{if} & \CKS,\spath \modelss \psi \mbox{ or
    }\CKS,\spath\modelss \psi' & \makebox[\widthof{blablablabla}]{} &
    \CKS,\spath\modelss \X\psi &
    \mbox{if} & \CKS,\spath^{1}\modelss \psi \\
        \CKS,\spath\modelss \psi\until\psi' &
    \mbox{if} & \multicolumn{5}{l}{\mbox{there exists }i\geq 0 \mbox{ such that
    }\CKS,\spath^{i}\modelss\psi' \mbox{ and for }0\leq j <i,\; \CKS,\spath^{j}\modelss\psi}
\end{array}
\]

\subsubsection{Tree semantics}

In the tree semantics, a formula holds in a state $\state$ of a
structure $\KS$
if it holds in the tree-unfolding of $\KS$ from $\state$.
The semantics of \QCTLs on trees could be derived from the structure
semantics, seeing $2^{\APf}$-\labeled trees as  infinite-state Kripke structures.
We define it
explicitly on trees though, as it will make the presentation of the
semantics for \QCTLi clearer.


 The satisfaction relation $\modelst$ for the tree semantics is
thus defined inductively as follows, where   $\ltree=(\tree,\lab)$ is
a $2^{\APf}$-\labeled $\Dirtree$-tree, 
$\noeud$ is a node and $\tpath$ is a path in $\tree$:
\[
\begin{array}{lclclcl}
  \ltree,\noeud\modelst p & \mbox{if} & p\in\lab(\noeud) &
  \makebox[\widthof{blablablablaaa}]{} &
    \ltree,\noeud\modelst \neg \phi &
    \mbox{if} & \ltree,\noeud\not\modelst \phi \\
        \ltree,\noeud\modelst  \phi \ou \phi'&
    \mbox{if} & \multicolumn{5}{l}{\ltree,\noeud \modelst \phi \mbox{ or
    }\ltree,\noeud\modelst \phi'} \\
    \ltree,\noeud\modelst \E\psi &
    \mbox{if} & \multicolumn{5}{l}{\mbox{there exists }\tpath\in\tPaths(\noeud) \mbox{ such
      that }\ltree,\tpath\modelst \psi} \\
        \ltree,\noeud\modelst \existsp{} \phi &
    \mbox{if} & \multicolumn{5}{l}{\mbox{there exists }\ltree'\Pequiv[p]\ltree \mbox{ such that
    }\ltree',\noeud\models\phi}\\
      \ltree,\tpath\modelst \phi & \mbox{if} &
      \ltree,\tpath_{0}\modelst\phi &
      \makebox[\widthof{blablablablaaa}]{} &
    \ltree,\tpath\modelst \neg \psi &
    \mbox{if} & \ltree,\tpath\not\modelst \psi \\
        \ltree,\tpath\modelst  \psi \ou \psi'&
    \mbox{if} & \ltree,\tpath \modelst \psi \mbox{ or
    }\ltree,\tpath\modelst \psi' & \makebox[\widthof{blablablablaaa}]{} &
    \ltree,\tpath\modelst \X\psi &
    \mbox{if} & \ltree,\tpath^{1}\modelst \psi \\
        \ltree,\tpath\modelst \psi\until\psi' &
    \mbox{if} & \multicolumn{5}{l}{\mbox{there exists }i\geq 0 \mbox{ such that
    }\ltree,\tpath^{i}\modelst\psi' \mbox{ and for }0\leq j <i,\; \ltree,\tpath^{j}\modelst\psi}
\end{array}
\]

We may write $\ltree\modelst\phi$ for $\ltree,\racine\modelst\phi$,
where $\racine$ is the root of $\ltree$, and given a Kripke structure $\KS$, a state $\state$  and a
\QCTLs formula $\phi$, we write $\KS,\state\modelst\phi$ if $\unfold[\KS]{\state}\modelst\phi$.


\section{\QCTLs with imperfect information}
\label{sec-QCTL-imp-inf}

We now enrich the models, syntax and semantics to capture
the idea of quantifications on atomic propositions being made with a
partial observation of the system.

\subsection{Compound Kripke structures}

First, we enrich Kripke structures by adding internal structure to
states: we set them as tuples of
local states. To ease presentation and obtain finite alphabets for our
tree automata in Section~\ref{sec-decidable}, we fix a collection
$\{\setlstates_{i}\}_{i\in [n]}$ of $n$ disjoint finite sets
of \emph{local states}. 

For 
$I\subseteq [n]$, we let $\Dirtreei\egdef\bigtimes_{i\in I}\setlstates_{i}$.
Let $J\subseteq I\subseteq [n]$. For $\dir=(\lstate_{i})_{i\in I}\in\Dirtreei[I]$, we define
the \emph{$\Dirtreei[J]$-projection} of $\dir$ as
$\projI[{\Dirtreei[J]}]{\dir}\egdef (\lstate_{i})_{i\in J}$. If $J=\emptyset$, we let $\projI[\emptyset]{\dir}\egdef
\blank$, where $\blank$ is a special symbol, and we let $\Dirtreei[\emptyset]\egdef\{\blank\}$.
This definition extends naturally to words and trees over
$\Dirtreei[I]$.
Observe that when projecting a tree, nodes with same projection are
merged. In particular,  for every $\Dirtreei[I]$-tree $\tree$,
$\projI[\emptyset]{\tree}$ is the only $\Dirtreei[\emptyset]$-tree, $\blank^{\omega}$.
We also define a \emph{lift} operator
$\liftI[I]{\dira}{}$ that, given  an
$\Dirtreei[J]$-tree rooted in $\dir$  and a tuple $\dira\in
\Dirtreei[I\setminus J]$, produces the $\Dirtreei[I]$-tree
rooted in $(\dir,\dira)$ defined as $\liftI[{\Dirtreei[I]}]{\dira}{\tree}\egdef
\{\noeud\in
(\dir,\dira)\cdot\Dirtreei[I]^{*}\mid
\projI[{\Dirtreei[J]}]{\noeud}\in\tree \}$.
Observe that because the sets $\{\setlstates_{i}\}_{i\in [n]}$ are
disjoint, the ordering of elements in tuples of $\Dirtreei[I]$ does not matter.
For an
 $(\APf,\Dirtreei[J])$-tree
$\ltree=(\tree,\lab)$, we define $\liftI[{\Dirtreei[I]}]{\dira}{\ltree}\egdef (\liftI[{\Dirtreei[I]}]{\dira}{\tree},\lab')$
where $\lab'(\noeud)\egdef \lab(\projI[{\Dirtreei[J]}]{\noeud})$. 
In the following we may write $\liftI[I]{\dira}{}$ for
$\liftI[{\Dirtreei[I]}]{\dira}{}$, and $\projI[J]{}$ instead of $\projI[{\Dirtreei[J]}]{}$.

\begin{definition}
A \emph{compound Kripke structure}, or \CKS, is a Kripke structure
  $\CKS=(\setstates,\relation,\lab)$ such that
  $\setstates\subseteq \Dirtreei[{[n]}]$. 
We call  $n$ the \emph{dimension} of \CKSs. 
\end{definition}

\begin{remark}
Note that by fixing finite sets of local states, we also fix a finite set of possible
states.  If it were not so,    our translation from \QCTLi to \QCTL in
Theorem~\ref{th-qctli-qctl}, as well as the one from \QCTLi to \MSOeql
in
Theorem~\ref{th-qctl-mso-eql} would no longer be valid, making us also
lose Corollary~\ref{cor-qctl-mso}.
We would no longer have equivalence in expressivity, but we would
still have that \QCTLi is at least as expressive as \MSO (resp. \MSOeql) for
structure semantics (resp. tree semantics).
Also our results on
model checking in Section~\ref{sec-modelchecking}, and in particular
our main result, Theorem~\ref{theo-decidable}, would still be valid.  
\end{remark}

To model the fact that quantifiers may not observe some local states, 
we define
a notion of observation and the associated notion of observational indistinguishability.

\begin{definition}
  \label{def-obs-struct}
An \emph{observation} is a finite set of
indices $\obs\subset\setn$. For an observation $\obs$ and $I\subseteq
[n]$, 
  two
  tuples $\dir,\dir'\in\Dirtreei[I]$
are \emph{$\obs$-indistinguishable},
written $\dir\oequiv\dir'$, if
$\projI[{I\cap\obs}]{\dir}=\projI[{I\cap\obs}]{\dir'}$.
\end{definition}

Intuitively, a quantifier with observation $\obs$ must choose the
valuation of atomic propositions \emph{uniformly} with respect to
$\obs$,  
and this notion of uniformity will vary
between the structure semantics and the tree semantics.
But first, let us introduce the syntax of \QCTLsi.

\subsection{\QCTLsi, syntax and semantics}

The syntax of \QCTLsi is  that of \QCTLs, except that
quantifiers over atomic propositions are parameterised by a set of
indices that defines what local states the quantifier can
``observe''.

\begin{definition}
  \label{def-syntax-QCTLsi}
  The syntax of \QCTLsi is defined by the following grammar:
  \begin{align*}
  \phi\egdef &\; p \mid \neg \phi \mid \phi\ou \phi \mid \E \psi \mid
  \existsp[p]{\obs} \phi\\
    \psi\egdef &\; \phi \mid \neg \psi \mid \psi\ou \psi \mid \X \psi \mid
  \psi \until \psi
\end{align*}
where $p\in\AP$ 
and $\obs\subset \setn$
is an observation.
\end{definition}

We use standard abbreviations:
$\top\egdef p\ou\neg p$, $\perp\egdef\neg\top$, $\F\psi \egdef \top \until \psi$, $\G\psi \egdef
\neg \F \neg \psi$ and $\A\psi \egdef \neg\E\neg\psi$.
The size $|\phi|$ of a formula
 $\phi$ is defined inductively as usual, but the following case: $|\existsp{\obs}\phi|\egdef 1 + |\obs| + |\phi|$.
We also classically define the syntactic fragment \QCTLi:

\begin{definition}
    \label{def-syntax-QCTLi}
  The syntax of \QCTLi is defined by the following grammar:
  \begin{align*}
  \phi\egdef &\; p \mid \neg \phi \mid \phi\ou \phi \mid 
  \existsp[p]{\obs} \psi \mid \E \X \phi \mid \A \X \phi \mid
\E \phi \until \phi \mid \A \phi \until \phi
\end{align*}
where $p\in\AP$ 
and $\obs\subset \setn$
is an observation.
\end{definition}

\subsubsection{Structure semantics}

In the case of structure semantics, uniformity 
is defined as follows:

\begin{definition}
  \label{def-unif-syst}
  Let $\CKS=(\setstates,\relation,\lab)$ be a \CKS, $p\in\AP$ and $\obs\subset\setn$.
 $\CKS$ is \emph{$\obs$-uniform in $p$} if for every pair of states
 $\state,\state'\in\setstates$ such that $\state\oequiv\state'$, it
 holds that
 $p\in\lab(\state)$ if and only if $p\in\lab(\state')$.
\end{definition}

We enrich the satisfaction relation
$\modelss$ with the following inductive case, where $(\CKS,\state)$ is
a pointed \CKS:
\[
\begin{array}{lcl}
  \CKS,\state\modelss \existsp{\obs} \phi &
  \mbox{if} & \mbox{there exists }\CKS'\Pequiv[p]\CKS \mbox{ such that
  }\CKS'\mbox{ is $\obs$-uniform in $p$ and }\CKS',\state\modelss\phi
\end{array}
\]

Observe that 
$\existsp{\{1,\ldots,n\}}\phi$ is equivalent to the \QCTLs formula 
 $\existsp{}\phi$. 

\subsubsection{Tree semantics}
\label{sec-QCTLi-tree-sem}

As  observed in the introduction, propositional quantifiers 
 can be seen as having perfect recall in the tree semantics and no
 memory in  the structure semantics.  The following definition for 
 indistinguishability on trees, which differs from that for \CKS, reflects this difference.

 \begin{definition}
   \label{def-obs-trees}
        Let $\ltree=(\tree,\lab)$ be a labelled $\Dirtreei[I]$-tree,
        $p\in\AP$ an atomic proposition and $\obs\subset\setn$ an observation. 
   Two nodes
   $\noeud=\noeud_{0}\ldots\noeud_{i}$ and
   $\noeud'=\noeud'_{0}\ldots\noeud'_{j}$ of $\tree$ are
   \emph{$\obs$-indistinguishable}, written $\noeud\oequivt\noeud'$, if
   $i=j$ and for all $k\in \{0,\ldots,i\}$ we have
   $\noeud_{k}\oequiv\noeud'_{k}$.
Tree $\ltree$ is \emph{$\obs$-uniform in $p$} if for every pair of nodes
 $\noeud,\noeud'\in\tree$ such that $\noeud\oequivt\noeud'$, we have
 $p\in\lab(\noeud)$ iff $p\in\lab(\noeud')$.
 \end{definition}
 
The tree semantics for \QCTLsi is defined on labelled
$\Dirtreei[n]$-trees, and it is obtained by  enriching $\modelst$
as follows:
\[
\begin{array}{lcl}
  \ltree,\noeud\modelst \existsp{\obs} \phi &
  \mbox{if} & \mbox{there exists }\ltree'\Pequiv[p]\ltree \mbox{ such that
  }\ltree'\mbox{ is $\obs$-uniform in $p$ and }\ltree',\noeud\modelst\phi.
\end{array}
\]


Consider the following \CTL formula:
$\ligne{p}\egdef  \A \F p \wedge \A \G (p
\rightarrow \A\X \A\G \neg p)$.

This formula  holds in a labelled tree if and only if each path contains
exactly one node labelled with $p$. Therefore, evaluating the \QCTLi formula
$\existsp{\emptyset}\ligne{p}$ amounts to choosing a level of the
tree where to place one horizontal line
of $p$'s.
\bmcomment{Explain more?}


\section{Expressiveness}
\label{sec-expressivity}

In this section we study the expressive power of our logics. We first
observe that for both semantics, \QCTLsi and \QCTLi are equally
expressive. We then prove that with structure semantics  \QCTLi is
expressively equivalent to  \QCTL, and thus also to \MSO.
 Finally we
show that under tree semantics \QCTLsi is expressively equivalent to \MSO with equal level predicate.
Note that Theorem~\ref{th-qctli-qctl}, Corollary~\ref{cor-qctl-mso}
and Theorem~\ref{th-qctl-mso-eql} below only hold if the logics can
talk about the local states. For this reason, in this section
 we assume a set of dedicated atomic
propositions $\APlstates=\bigcup_{i\in
  [n]}\bigcup_{\lstate\in\setlstates_{i}}\{p_{l}\}\subset \AP$ such that
 for every \CKS $\CKS=(\setstates,\relation,\lab)$, for each $i\in [n]$ and
  $\lstate\in\setlstates_{i}$, for each state
  $\state=(\lstate_{1},\ldots,\lstate_{n})\in\setstates$, we have
  $p_{\lstate}\in\lab(\state)$ iff $\lstate_{i}=\lstate$.

\subsection{\QCTLsi, \QCTLi and \QCTL}
\label{sec-express1}

We first
remark that for the same reason why \QCTLs is no more expressive
than \QCTL, also \QCTLsi and \QCTLi are equally expressive (the proof of \cite[Proposition
  3.8]{DBLP:journals/corr/LaroussinieM14} readily applies):
\begin{theorem}
  \label{prop-expr-qctl-qctls}
  Under both semantics, \QCTLsi and \QCTLi are expressively equivalent.
\end{theorem}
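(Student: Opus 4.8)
The plan is to mimic the standard argument showing \QCTLs collapses to \QCTL, as done in \cite[Proposition 3.8]{DBLP:journals/corr/LaroussinieM14}, and verify that the added observation parameters on quantifiers do not obstruct it. One direction is trivial: every \QCTLi formula is syntactically a \QCTLsi formula, so \QCTLi is at most as expressive as \QCTLsi. The substantive direction is to translate an arbitrary \QCTLsi state formula into an equivalent \QCTLi formula. The key idea, exactly as for the perfect-information case, is that a \CTLs path formula can be captured by fresh atomic propositions together with propositional quantification: for each state subformula $\phi$ appearing under a temporal operator, I introduce a fresh proposition $p_\phi$, assert via a quantifier that $p_\phi$ marks exactly the nodes/states satisfying $\phi$, and then rewrite the path formula as a pure-\LTL-style formula over these fresh propositions, which can be handled by the \CTL-like modalities $\E\X$, $\A\X$, $\E\,\until$, $\A\,\until$ available in \QCTLi.

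I would proceed by structural induction on the formula, pushing the translation inward. First I would treat the outermost path quantifier $\E\psi$: writing $\psi$ in terms of state subformulas $\phi_1,\dots,\phi_k$ occurring maximally below temporal operators, I replace each $\phi_j$ by a fresh $p_{\phi_j}$ and prefix the result with existential quantifiers asserting that each $p_{\phi_j}$ holds exactly where $\phi_j$ holds; the resulting formula uses only the Boolean connectives and the \QCTLi temporal modalities over the $p_{\phi_j}$. The crucial point is which observation to attach to these auxiliary quantifiers. Since the auxiliary propositions are meant to label a node exactly according to the truth of a state formula there — an unconstrained, state-dependent choice — the natural choice is the full observation $\{1,\dots,n\}$, i.e. perfect observation. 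Under both semantics, a quantifier with full observation is unconstrained (indeed, the excerpt already notes that $\existsp{\{1,\dots,n\}}\phi$ is equivalent to the \QCTLs quantifier $\existsp{}\phi$), so these auxiliary quantifiers behave exactly as in the original \QCTLs-to-\QCTL translation and faithfully encode the desired labelling, the uniformity constraint being vacuous.

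The one genuine thing to check — and what I expect to be the main obstacle — is that the \emph{observation-parameterised} quantifiers already present in the formula survive the rewriting unchanged and keep their intended meaning. Because the translation only introduces new auxiliary quantifiers (with full observation) and reshuffles temporal operators, while leaving every original $\existsp[p]{\obs}$ in place with its observation $\obs$, the uniformity semantics attached to the pre-existing quantifiers is untouched. I would argue that the equivalence $\CKS,\state \modelss \E\psi$ iff $\CKS,\state \modelss (\text{translated formula})$ holds by the same reasoning as in the perfect-information proof: for any fixed labelling of the auxiliary propositions correctly encoding the $\phi_j$, the truth of the path formula and of its \QCTLi rendering coincide, and the full-observation quantifiers range over exactly such labellings. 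This argument is uniform in the semantics, so it applies verbatim to both the structure and the tree semantics, using respectively Definition~\ref{def-unif-syst} and Definition~\ref{def-obs-trees} for the pre-existing quantifiers.

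Finally I would note that the translation is correct regardless of the observations $\obs$ decorating the original quantifiers precisely because those quantifiers are neither created nor destroyed by the procedure; only \CTLs path formulas are decomposed, and path formulas contain no propositional quantifiers above the state-formula level that could interact with observations. Hence the proof of \cite[Proposition 3.8]{DBLP:journals/corr/LaroussinieM14} indeed applies with no essential modification, establishing that \QCTLsi and \QCTLi are expressively equivalent under both the structure and tree semantics.
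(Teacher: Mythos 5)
Your proposal is correct and follows exactly the route the paper takes: the paper's entire ``proof'' is the remark that the argument of \cite[Proposition 3.8]{DBLP:journals/corr/LaroussinieM14} readily applies, and you have spelled out why — the auxiliary quantifiers introduced to eliminate \CTLs path formulas can be given the full observation $\{1,\ldots,n\}$ (hence are unconstrained under both semantics), while the pre-existing observation-parameterised quantifiers are left untouched. The one point you correctly identify as the only genuine issue (observations on the new quantifiers) is indeed the only thing that needs checking beyond the cited result.
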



We now
prove that for the structure semantics, \QCTLi is no more expressive
than \QCTL, and thus has the same expressivity as \MSO.

\begin{theorem}
  \label{th-qctli-qctl}
  Under structure semantics, \QCTLi and \QCTL are  expressively equivalent.
\end{theorem}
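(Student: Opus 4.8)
The plan is to establish the two inclusions separately. The inclusion of \QCTL into \QCTLi is immediate: by the remark following the structure semantics, $\existsp{[n]}\phi$ is equivalent to the plain quantifier $\existsp{}\phi$, so every \QCTL formula is already a \QCTLi formula (with all observations set to $[n]$), and $\obs$-uniformity with $\obs=[n]$ is vacuous since each $\oequiv$-class is a singleton. The content of the theorem is therefore the converse inclusion, which I would prove by exhibiting a translation $\phi\mapsto\widehat{\phi}$ from \QCTLi to \QCTL, defined by induction on $\phi$ and preserving truth under the structure semantics. The atomic, Boolean and temporal cases are handled homomorphically ($\widehat{p}=p$, $\widehat{\neg\phi}=\neg\widehat{\phi}$, $\widehat{\E\X\phi}=\E\X\widehat{\phi}$, $\widehat{\E\,\phi\until\phi'}=\E\,\widehat{\phi}\until\widehat{\phi'}$, and so on), so the only real work is the quantifier case. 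There I would crucially use the dedicated propositions $\APlstates$ assumed in this section, which let a \QCTL formula read off the local state of a state in each component.

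The idea is that, under the structure semantics, the only effect of the observation $\obs$ in $\existsp{\obs}\phi$ is to force the chosen valuation of $p$ to be constant on each $\oequiv$-class, a \emph{global} constraint on the finitely many states of the \CKS. This constraint is itself expressible in \CTL using $\APlstates$. Writing $O\egdef\obs\cap[n]$, for each observable profile $\dir=(\lstate_i)_{i\in O}\in\Dirtreei[O]$ put $\beta_{\dir}\egdef\bigwedge_{i\in O}p_{\lstate_i}$, which holds exactly at the states whose $\obs$-projection is $\dir$, and set
\[
  \theta_{\obs,p}\;\egdef\;\bigwedge_{\dir\in\Dirtreei[O]}\Big(\A\G(\beta_{\dir}\rightarrow p)\;\ou\;\A\G(\beta_{\dir}\rightarrow\neg p)\Big).
\]
This \QCTL (indeed \CTL) formula asserts that, among the states reachable from the evaluation point, all states of a given $\obs$-class carry $p$ or none of them does. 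I would then set $\widehat{\existsp{\obs}\phi}\egdef\existsp{}(\theta_{\obs,p}\wedge\widehat{\phi})$, using the plain full-observation \QCTL quantifier. Note that $\Dirtreei[O]$ is finite, so $\theta_{\obs,p}$ is a genuine formula (its size may be exponential in $|O|$, but expressiveness is insensitive to this).

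Correctness is proved by induction, and the quantifier case is where the main obstacle lies: $\theta_{\obs,p}$ only enforces uniformity over the states reachable from $\state$, whereas $\obs$-uniformity in Definition~\ref{def-unif-syst} is required over \emph{all} states of the structure, and \CTL cannot speak about unreachable states. The forward direction is harmless: an $\obs$-uniform $\CKS'$ is in particular uniform on its reachable part, so $\CKS',\state\modelss\theta_{\obs,p}$. For the converse I would bridge the gap by a completion argument: given $\CKS'\Pequiv[p]\CKS$ with $\CKS',\state\modelss\theta_{\obs,p}\wedge\widehat{\phi}$, build $\CKS''$ by keeping the $p$-labelling of $\CKS'$ on the reachable states and extending it to the unreachable states of each $\oequiv$-class by the common reachable value (or an arbitrary constant if the class meets no reachable state). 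Then $\CKS''$ is globally $\obs$-uniform in $p$ and agrees with $\CKS'$ on every proposition at every reachable state. The argument is completed by a routine locality lemma — that the truth of a \QCTL formula at $\state$ under the structure semantics depends only on the substructure reachable from $\state$, since the transition relation is fixed and propositional quantifiers may be matched on the reachable part and extended freely outside — which yields $\CKS'',\state\modelss\widehat{\phi}$, hence $\CKS'',\state\modelss\phi$ by the induction hypothesis, and therefore $\CKS,\state\modelss\existsp{\obs}\phi$.
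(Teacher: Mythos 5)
Your proposal is correct and follows essentially the same route as the paper: the easy inclusion via $\exists^{[n]}$, and a translation whose quantifier case is exactly the paper's formula $\transs{\existsp{\obs}\phi}$, expressing uniformity over reachable $\oequiv$-classes with the propositions in $\APlstates$. The only difference is that you spell out the reachable-versus-global uniformity issue and the completion/locality argument, which the paper dispatches with a one-line remark that checking uniformity on the reachable part suffices.
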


\begin{proof}
  It is quite clear that \QCTLi subsumes \QCTL. Observe however that
  the  quantifier on propositions  from \QCTL can be translated
  using the quantifier $\exists^{[n]}$ only because we have fixed the
  dimension of our models. If we allowed for
  models with arbitrary dimension we would have to add the classic
  quantifier $\exists$ in the syntax of \QCTLi for it to capture \QCTL.

  For the other direction, we define a translation
  $\transs{~~}$ from \QCTLi to \QCTL. We only provide the inductive case for
  the quantification on propositions, the others being trivial.
  \[
    \transs{\existsp{\obs} \phi} \egdef \existsp{}
    \left(\bigwedge_{(\lstate_{{1}},\ldots,\lstate_{{k}})\in
      \Dirtreei[{\obs\cap[n]}]}
    \A\G (\bigwedge_{i=1}^{k}p_{\lstate_{{i}}} \rightarrow
      p) \vee \A\G(\bigwedge_{i=1}^{k}p_{\lstate_{{i}}} \rightarrow
      \neg p )\right) \wedge\; \transs{\phi}.
  \]
 Observe that checking uniformity of $p$ in the
  reachable part of the model is sufficient, as the labelling of
  unreachable states is indifferent.
  It can be proven easily that
   for every \CKS $\CKS$, state $\state\in\CKS$ and  formula $\phi\in\QCTLi$, it
  holds that
  $\CKS,\state\modelss\phi \mbox{\;\; iff\;\; }\CKS,\state\modelss
  \transs{\phi}$.
%
%
\end{proof}

\begin{remark}
  \label{rem-size-transs}
One can check that $|\transs{\phi}|=O(n m^{n}|\phi|)$,
where $m=\max_{i\in [n]} |\setlstates_{i}|$.
\end{remark}

\subsection{\QCTLi and \MSO with equal level}
\label{sec-MSO}

We briefly recall the definition \MSOeql
(see, \eg, \cite{elgot-rabin66,thomas-msoeqlevel}
for more detail). In the following, $\setvarfo=\{x,y,\ldots\}$
(resp. $\setvarso=\{X,Y,\ldots\}$) is a countably-infinite set of
first-order (resp. second-order) variables. We also use a predicate
$P_{p}$  for each atomic
proposition $p\in\AP$.

The syntax of \MSO with equal level relation, or \MSOeql, is given by the following grammar:
\[\phi::= P_{p}(x) \mid x=y \mid \edge(x,y) \mid x\in X \mid \eql(x,y)
\mid \neg
\phi \mid \phi \ou \phi \mid \exists x. \phi \mid \exists X. \phi \]
where $p\in\AP$, $x,y\in\setvarfo$ and $X\in\setvarso$.

The syntax of \MSO is obtained by removing the $\eql(x,y)$ production
rule.
 We write
$\phi(x_{1},\ldots,x_{i},X_{1},\ldots,X_{j})$ to indicate that
variables $x_{1},\ldots,x_{i}$ and $X_{1},\ldots,X_{j}$ may appear
free in $\phi$. Without loss of generality we assume that a variable
cannot appear both free and quantified in a formula.
We use the standard semantics of \MSO,
the successor relation symbol $\edge$ being interpreted by the
transition relation on Kripke structures, and by the child relation on
trees. The semantics for \MSOeql is only defined on trees, and
$\eql(x,y)$ holds if the nodes denoted by $x$ and $y$ are at the same
depth in the tree. We write
$\model,\state_{1},\ldots,\state_{i},\setstates_{1},\ldots,\setstates_{j}\models
\phi(x_{1},\ldots,x_{i},X_{1},\ldots,X_{j})$ when $\phi$ holds in
model $\model$ when $x_{k}$ (resp. $X_{k}$) is interpreted as
$\state_{k}$ (resp. $\setstates_{k}$) for $k\in [i]$ (resp. $k\in [j]$). 

Since we aim at comparing the expressivity of \MSO (with equal level predicate in the case of tree
semantics) with that of the modal logic \QCTLi, we will consider \MSO formulas of the form $\phi(x)$,
where $x$ is a free variable representing the point of evaluation in
the model. 

First, we have seen that  under the structure semantics \QCTLi has
the same expressivity as \QCTL. Since \QCTL has the same expressivity
as \MSO (evaluated on reachable parts of the structures)
\cite{DBLP:journals/corr/LaroussinieM14}, we obtain the following corollary of Theorem~\ref{th-qctli-qctl}:

\begin{corollary}
  \label{cor-qctl-mso}
  Under structure semantics, \MSO and \QCTLi are equally expressive.
\end{corollary}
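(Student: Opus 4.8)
The plan is to obtain the result purely by transitivity, chaining the two expressive-equivalences that are already available over the class of pointed compound Kripke structures under the structure semantics. Theorem~\ref{th-qctli-qctl} supplies the first link: \QCTLi and \QCTL are expressively equivalent. For the second link I would invoke the known result from \cite{DBLP:journals/corr/LaroussinieM14} that \QCTL and \MSO are expressively equivalent, where \MSO is evaluated on the reachable part of the structure. Composing these two equivalences yields that \QCTLi and \MSO are equally expressive, which is exactly the claim.

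Concretely, since ``equally expressive'' requires a translation in both directions, I would argue as follows. For the direction from \QCTLi to \MSO, take a \QCTLi formula, apply the translation $\transs{~~}$ of Theorem~\ref{th-qctli-qctl} to obtain an equivalent \QCTL formula, and then apply the \QCTL-to-\MSO translation of \cite{DBLP:journals/corr/LaroussinieM14}. For the converse, starting from an \MSO formula $\phi(x)$ whose free variable marks the point of evaluation, first translate it to an equivalent \QCTL formula via \cite{DBLP:journals/corr/LaroussinieM14}, and then observe that \QCTL is syntactically subsumed by \QCTLi, the classical quantifier $\exists$ being captured by $\exists^{[n]}$ (as noted in the proof of Theorem~\ref{th-qctli-qctl}, this works precisely because the dimension $n$ is fixed). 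Each composition preserves semantic equivalence under $\modelss$, so the resulting formulas agree on all pointed \CKS.

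Since the argument is a bare transitivity, there is no real difficulty; the only (modest) obstacle I would flag is ensuring that the two equivalences are stated over a compatible class of models and with a compatible notion of equivalence. A \CKS is simply a Kripke structure whose state space is a subset of $\Dirtreei[{[n]}]$, so the \cite{DBLP:journals/corr/LaroussinieM14} result, established for arbitrary Kripke structures, applies verbatim; neither translation depends on the product structure of states. Two provisos must be carried through uniformly: the reachability caveat, which matches the observation in the proof of Theorem~\ref{th-qctli-qctl} that uniformity need only be checked on the reachable part of the model, and the standing assumption of this section that the dedicated propositions $\APlstates$ are available so that both logics can refer to the local states. With these fixed, transitivity closes the argument.
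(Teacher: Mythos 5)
Your proposal is correct and matches the paper's own argument, which likewise obtains the corollary by chaining Theorem~\ref{th-qctli-qctl} with the known expressive equivalence of \QCTL and \MSO (on reachable parts of structures) from \cite{DBLP:journals/corr/LaroussinieM14}. The extra care you take about the reachability caveat and the availability of the propositions in $\APlstates$ is consistent with the paper's standing assumptions in that section.
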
 


We now turn to the case of the tree semantics. The constraint put on
the tree semantics for the proposition quantifier involves testing 
 length equality for arbitrarily long paths or, in terms of
trees, comparing the depths of arbitrarily deep nodes. It is thus not
a surprise that \QCTLi with tree semantics is more expressive than
\MSO on trees. It also seems natural that extending \MSO with the
equal level predicate allows to capture this constraint on 
proposition quantification, and thus that \MSOeql is as expressive
as \QCTLi with tree semantics. We establish with the following theorem that in fact the other
direction also holds.

\begin{theorem}
  \label{th-qctl-mso-eql}
  Under tree semantics, \MSOeql and \QCTLi are equally expressive.
\end{theorem}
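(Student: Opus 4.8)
The plan is to prove the two inclusions separately, reusing the classical correspondence between \CTL-style modalities and \MSO and isolating the role of the equal-level predicate $\eql$, which should turn out to correspond \emph{exactly} to the empty-observation quantifier $\existsp{\emptyset}$. Throughout, the point of evaluation is carried by a free first-order variable $x$. For the direction from \QCTLi to \MSOeql I would define a translation $\tau$ by induction on state formulas. The Boolean cases are immediate, and the cases $\E\X$, $\A\X$, $\E\phi\until\phi'$, $\A\phi\until\phi'$ are handled by the standard \CTL-to-\MSO translation (descendants linked by an \MSO-definable ancestor chain). The only genuinely new case is $\existsp{\obs}\phi$. Here I render the choice of a $p$-variant $\ltree'\Pequiv[p]\ltree$ as a second-order quantifier $\exists X_p$ over the set of nodes newly carrying $p$, taking $\tau(\phi)$ with every atom $P_p(z)$ replaced by $z\in X_p$, and I guard this quantifier with a formula asserting that $X_p$ respects $\obs$-indistinguishability of nodes. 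The crux of this direction is to define $\oequivt$ in \MSOeql. Since $\oequivt$ requires two nodes to be at the same depth and to agree, level by level, on the observable local states of their entire histories, I would pair each ancestor of $x$ with the ancestor of $y$ at the same depth and require agreement on the dedicated local-state propositions of $\APlstates$:
\[
\psi_\obs(x,y)\;\egdef\;\eql(x,y)\wedge\forall u\,\forall v.\,\big(\mathrm{anc}(u,x)\wedge\mathrm{anc}(v,y)\wedge\eql(u,v)\big)\rightarrow\bigwedge_{i\in\obs\cap[n]}\bigwedge_{\lstate\in\setlstates_{i}}\big(P_{p_\lstate}(u)\leftrightarrow P_{p_\lstate}(v)\big),
\]
where $\mathrm{anc}$ is the \MSO-definable reflexive-transitive closure of $\edge$ on trees. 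This is precisely where $\eql$ is indispensable, since $\oequivt$ compares depths across branches, which plain \MSO cannot do; this also matches the paper's remark that \QCTLi with tree semantics is strictly more expressive than \MSO.

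For the converse direction I would first recall that, via $\obs=[n]$, \QCTL embeds into \QCTLi under tree semantics: on a tree, $[n]$-indistinguishability of nodes is equality, so $[n]$-uniformity is vacuous and $\existsp{[n]}$ is exactly the ordinary propositional quantifier. Consequently the \MSO-to-\QCTL translation of Laroussinie and Markey~\cite{DBLP:journals/corr/LaroussinieM14} can be reused almost verbatim: free variables are coded by fresh atomic propositions, the atoms $P_p(x)$, $x=y$, $\edge(x,y)$, $x\in X$ become suitable $\E\F$-formulas, first-order quantifiers become $\existsp{[n]}$ guarded by a singleton constraint on the coding proposition, second-order quantifiers become unguarded $\existsp{[n]}$, and Booleans translate directly. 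The single rule that must be added is the one for $\eql$.

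The equal-level gadget is the heart of the argument and the step I expect to be the main obstacle. The key observation is that $\existsp[q]{\emptyset}$ can only produce labellings in which the truth of $q$ depends solely on the \emph{depth} of a node, because $\emptyset$-indistinguishability of nodes is exactly ``same depth''; combined with the \CTL formula $\ligne{q}$, which forces exactly one $q$ per branch, such a $q$ marks exactly one level of the tree, i.e.\ a single horizontal line. I would therefore translate $\eql(x,y)$, where $p_x,p_y$ are the singleton propositions coding $x$ and $y$, by
\[
\existsp[q]{\emptyset}\big(\ligne{q}\wedge\E\F(p_x\wedge q)\wedge\E\F(p_y\wedge q)\big),
\]
so that the two marked nodes share a common horizontal line if and only if they have the same depth. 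The delicate points are recognising that the empty-observation quantifier is precisely a ``level guess'' and checking both implications, namely that $\ligne{q}$ together with $\emptyset$-uniformity pins down a unique depth, and conversely that any common depth yields a valid witness $q$. Once this gadget is validated, the remaining work is routine: verifying the singleton encodings of first-order variables and observing that freely mixing $\obs=[n]$ and $\obs=\emptyset$ quantifiers causes no interference (no hierarchical ordering of observations is needed for expressivity, only for the decidability results of Section~\ref{sec-modelchecking}). A straightforward simultaneous induction on $\phi$ then establishes, for both translations, that $\tau$ and the \MSOeql-to-\QCTLi translation preserve truth at the point of evaluation over the coding of the free variables, which yields the claimed expressive equivalence.
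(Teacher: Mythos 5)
Your proposal is correct and follows essentially the same route as the paper: the same \MSOeql definition of $\obs$-indistinguishability (same-depth ancestors agreeing on the dedicated local-state propositions) for the \QCTLi-to-\MSOeql direction, and the same reuse of the Laroussinie--Markey translation with the $\existsp[q]{\emptyset}\ligne{q}$ horizontal-line gadget for $\eql$ in the converse direction. The only cosmetic differences are that you phrase the line-membership test with $\E\F(p_{x}\wedge q)$ instead of $\A\G(p_{x}\rightarrow q)$ (equivalent under the singleton encoding) and omit the trivial subcase $\eql(x,x_k)$ where one argument is the root.
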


\begin{proof}
We first show how to express in \MSOeql that two nodes in the unfolding
of a \CKS are $\obs$-indistinguishable (see
Definition~\ref{def-obs-trees}).
Let $\obs$ be an observation. We define the \MSOeql formula
$\phi^{\obs}(x,y)$ as follows:
\[\phi^{\obs}(x,y)\egdef \eql(x,y) \wedge \forall x'. \forall
y'. \left (x'\pref x \wedge y'\pref y \wedge \eql(x',y') \rightarrow
\bigwedge_{i\in \obs\cap [n]} \bigwedge_{\lstate\in\setlstates_{i}}
P_{p_{\lstate}}(x')\leftrightarrow P_{p_{\lstate}}(y')\right )\]
where $x'\pref x$ is an \MSO formula expressing that there is a path
from $x'$
to
  $x$.

Clearly, for every \CKS $\CKS$ and nodes $\noeud,\noeud'$ in its
unfolding $\unfold{\state}$ from some state $\state$, 
\[\unfold{\state},\noeud,\noeud'\models \phi^{\obs}(x,y) \mbox{ \;\;iff\;\;
}\noeud\oequivt\noeud'.\]
It is then easy to see that \QCTLi with tree semantics can be
translated into \MSOeql: the translation for \CTL is standard, and
propositional quantification with imperfect information can be expressed
using second order quantification and the above formula for
$\obs$-indistinguishability.

For the other direction, we build upon the following translation from \MSO to
\QCTL, presented in \cite{DBLP:journals/corr/LaroussinieM14}. 
For  $\phi(x,\liste{x}{1}{i},\liste{X}{1}{j})\in\MSO$, we inductively define
$\transt{\phi}$ as:
\[
\begin{array}{rclcrcl}
  \transt{P_{p}(x)} & = & p & \mbox{\hspace{2cm}} & \transt{P_{p}(x_k)} & = &
  \E\F (p_{x_k} \wedge p) \\[2pt]
  \transt{x=x_k} & = & p_{x_k} & & \transt{x_k=x_l} & = & \E\F(p_{x_k}\wedge
  p_{x_l}) \\[2pt]
\transt{x\in X_{k}} & = & p_{X_{k}} & & \transt{x_k\in X_{k}} & = & \E\F
(p_{x_k}\wedge p_{X_{k}})\\[2pt]
\transt{\neg \phi'} & = & \neg \transt{\phi'} & & \transt{\phi_{1}\vee
\phi_{2}} & = & \transt{\phi_{1}} \vee \transt{\phi_{2}}\\[2pt]
\transt{\exists x_k.\phi'} & = & \existsp[p_{x_k}]{}\uniq[p_{x_k}] \wedge
\transt{\phi'} & & \transt{\exists X_{k}.\phi'} & = & \existsp[p_{X_{k}}]{}
\transt{\phi'}\\[2pt]
  \transt{\edge(x,x_k)} & = & \E\X p_{x_k} & & \transt{\edge(x_k,x)} & = &
  \perp \\[2pt]
\multicolumn{7}{c}{\transt{\edge(x_k,x_l)} \;\;\; = \;\;\;  \E\F
  (p_{x_k}\wedge \E\X p_{x_l})}
\end{array}
\]
where $\uniq[p]\egdef \E\F p \wedge \forall q.\;\left ( \E\F(p \wedge
  q) \rightarrow \A\G (p \rightarrow q)\right )$ holds in a tree iff
it has exactly one node labelled with $p$. Observe that $x$ being
interpreted as the root of a tree 
it has no incoming edge, hence the translation of $\edge(x_k,x)$.  


We extend this translation into one from \MSOeql to \QCTLi
by adding the following rules:
\[
\begin{array}{rclcrcl}
  \transt{\eql(x,x_{k})} & = & p_{x_{k}} & \mbox{\hspace{1cm}} & \transt{\eql(x_{k},x_l)} & = &
  \existsp{\emptyset}\ligne{p} \wedge \A\G (p_{x_{k}}\rightarrow p \wedge p_{x_l}\rightarrow p)
\end{array}
\]
Observe that $x$ being interpreted as the root, $x_{k}$ is on the same
level as $x$ if and only if it is also assigned the root. Concerning the
second case, recall from Section~\ref{sec-QCTLi-tree-sem} that the
\QCTLi formula 
$\existsp{\emptyset}\ligne{p}$ places in the tree one unique horizontal line of
$p$'s. Requiring that $x_{k}$ and $x_{l}$ be both on this line thus ensures
that they are on the same level.
It is then easy to prove by induction the following lemma:
\begin{lemma}
  \label{lem-MSOeq-QCTLi}
For every  $\phi(x,x_{1},\ldots,x_{i},X_{1},\ldots,X_{j})\in\MSOeql$
and every pointed \CKS $(\CKS,\state)$,
 \[\unfold{\state},\state, \liste{\noeud}{1}{i},\liste{\setnodes}{1}{j}\models
\phi(x,\liste{x}{1}{i},\liste{X}{1}{j})$ \;\;iff\;\;
$\ltree'_{\KS}(\state),\state\modelst \transt{\phi}\]
where $\unfold{\state}'$ is obtained from $\unfold{\state}$ by
changing the labelling for variables $p_{x_{k}}$ and $p_{X_{k}}$ as
follows: $p_{x_{k}}\in\lab'(\noeud)$ if $\noeud=\noeud_{k}$ and
$p_{X_{k}}\in\lab'(\noeud)$ if $\noeud\in\setnodes_{k}$.
\end{lemma}
In particular, it follows that $\unfold{\state},\state\models \phi(x)$
iff $\unfold{\state},\state\models\transt{\phi}$.
\end{proof}

\begin{remark}
  \label{rem-normal-form}
  The two-way translation between \QCTLi and \MSOeql shows that when
  local states are identified by atomic propositions, there is a
  normal form for \QCTLi formulas involving only blind and
  perfect-information quantifiers.
\end{remark}
 


\section{Model checking \QCTLi}
\label{sec-modelchecking}

We now study the model-checking problem for \QCTLsi, both
for structure and tree semantics. In other terms, we study the problem
of deciding, given a finite \CKS $\CKS$, a state $\state\in\CKS$ and a
\QCTLsi formula $\phi$, whether it holds that $\CKS,\state\modelss\phi$
(or $\CKS,\state\modelst\phi$ for the tree semantics). 

\subsection{Structure semantics}
\label{sec-mc-struct}

We first prove that under structure semantics, similarly to \QCTLs and
\QCTL,  the model-checking problem is
\PSPACE-complete for both  \QCTLi and \QCTLsi. Observe that if  $n$ is
fixed the translation from
\QCTLi to \QCTL from Theorem~\ref{th-qctli-qctl}
suffices to obtain the upper bound.  But this translation, being exponential in $n$ (see
Remark~\ref{rem-size-transs}), is not enough if $n$ is not fixed; we provide an algorithm to show that the result holds even
if $n$ is part of the input.

\newcounter{mc-struct}
\setcounter{mc-struct}{\value{theorem}}
\begin{theorem}
  \label{mc-struct}
  Under structure semantics, model checking  \QCTLsi
 is \PSPACE-complete.
\end{theorem}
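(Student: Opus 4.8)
The plan is to establish the two directions of the \PSPACE-completeness claim separately: membership in \PSPACE, and \PSPACE-hardness. For the upper bound, I would design a recursive algorithm that evaluates a \QCTLsi formula on a pointed \CKS using only polynomial space. The key observation is that the structure semantics fixes a finite state space $\setstates\subseteq\Dirtreei[{[n]}]$, so quantification $\existsp{\obs}\phi$ ranges over the finitely many $\obs$-uniform relabellings of the model modulo $p$. The naive approach would guess a full truth assignment for $p$ over all states, but this requires no more than polynomial space: the algorithm guesses, state by state, the valuation of $p$, checks $\obs$-uniformity (which is a local condition comparing pairs of $\obs$-indistinguishable states), and recursively verifies $\phi$ in the modified structure. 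Since $\existsp{\{1,\ldots,n\}}\phi$ coincides with ordinary \QCTL quantification, and the \CTL path-quantifier cases and Boolean cases are handled by standard \CTLs model-checking routines that run in polynomial space, the overall recursion has depth bounded by $|\phi|$ and each level stores only one modified labelling of size $O(|\CKS|)$ together with the $\obs$-uniformity bookkeeping.

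\textbf{The main obstacle} for the upper bound is ensuring that the alternation of propositional quantifiers does not force an exponential blow-up in space. I expect to handle this by an \AP\SPACE-style argument: propositional quantifiers alternate existential and universal choices, and the whole evaluation can be cast as an alternating polynomial-time computation, so that membership follows from $\APTIME=\PSPACE$. Concretely, the alternating machine guesses (or universally branches over) each successive relabelling of the quantified proposition, on the fly verifying $\obs$-uniformity, and then hands control to a subroutine for the remaining formula; the nesting of the $\E\psi$ path quantifiers is evaluated by the usual \CTLs fixpoint/automaton method, which itself fits within the alternating polynomial-time budget. Care must be taken that the translation via Theorem~\ref{th-qctli-qctl} is \emph{not} used here, since it is exponential in the dimension $n$; the direct algorithm must manipulate the parameter $\obs$ and the tuple structure of states explicitly, so that the complexity bound holds even when $n$ is part of the input.

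\textbf{For the lower bound}, \PSPACE-hardness is inherited directly from \QCTL: since \QCTL model checking under structure semantics is already \PSPACE-hard, and \QCTLi subsumes \QCTL (a plain propositional quantifier $\existsp{}$ is expressible as $\existsp{\{1,\ldots,n\}}$, as noted after Definition~\ref{def-unif-syst}), any \QCTL model-checking instance embeds into a \QCTLi instance of polynomial size. Because \QCTLsi contains \QCTLi, the hardness transfers immediately to \QCTLsi as well. I would therefore simply reduce from \QCTL model checking, observing that a one-dimensional \CKS is just an ordinary Kripke structure and that the reduction is the identity on the structure with each quantifier annotated by the full observation $\{1,\ldots,n\}$.

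Combining the alternating-time upper bound with the reduction from \QCTL gives \PSPACE-completeness for both \QCTLi and \QCTLsi, with the essential point being that the fixed finite state space lets us treat propositional quantification within a polynomial-space alternating framework rather than through the exponential syntactic translation.
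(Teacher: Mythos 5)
Your proposal is correct and follows essentially the same route as the paper: guess a labelling for the quantified proposition over the finite state space, check $\obs$-uniformity by a polynomial pairwise comparison of states (avoiding the exponential translation of Theorem~\ref{th-qctli-qctl}), recurse on the nesting depth of quantifiers with \CTLs model checking at the base, and inherit hardness from \QCTL via the full-observation quantifier. The only cosmetic difference is that you discharge the quantifier alternation via alternating polynomial time, whereas the paper guesses labellings for the outermost quantified subformulas, marks states with fresh atoms, and concludes by $\NPSPACE=\PSPACE$; both devices are standard and interchangeable here.
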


\begin{proof}
Hardness follows from the  \PSPACE-hardness of model checking \QCTL \cite{DBLP:journals/corr/LaroussinieM14}. 
For the upper bound, we modify the algorithm described in
\cite[Theorem 4.2]{DBLP:journals/corr/LaroussinieM14}. The
main difference is that when we guess a labelling for 
 $p$ on a \CKS $\CKS$,  we need to check that this labelling is
uniform.  With structure semantics this can be done in deterministic time
$O(|\CKS|^{2}\cdot n)$:
 look at every pair of states, and check that if they
are observationally equivalent (tested by comparing at most $n$
pairs of local states) then they agree on $p$.

We prove that the model-checking problem for \QCTLsi is in \PSPACE by
induction on the nesting depth $k$ of propositional quantification in
input formulas.
If $k=0$, \ie, the input formula is a \CTLs formula, call a
\CTLs model-checking algorithm running in polynomial space. For
nesting depth $k+1$,
the input formula $\phi$ is of the form $\phi=\Phi[q_{i}\mapsto
\existsp[p_{i}]{\obs_{i}}\phi_{i}]$, where  $\Phi$ is a  \CTLs
formula and for each $i$, $q_{i}$ is a fresh atomic proposition,
 $\obs_{i}$ is an observation and
 $\phi_{i}$ a \QCTLsi formula of nesting depth at most $k$. For each $i$, guess in linear time a
 labelling for $p_{i}$, check in quadratic time that it is uniform,
 evaluate formula $\phi_{i}$ in each state with this
 labelling, and mark states where it holds with $q_{i}$. By induction
 hypothesis, evaluating $\phi_{i}$ can be done in polynomial
 space. It just remains to evaluate the \CTLs formula $\Phi$ in polynomial
 space. The overall procedure thus runs in nondeterministic polynomial
 space, and because \NPSPACE = \PSPACE, the problem is in \PSPACE.
\end{proof}

\subsection{Tree semantics}
\label{sec-mc-tree}

We turn to the case of tree semantics. The first undecidability result
comes at no surprise since \QCTLi can  express the existence of
 winning strategies in imperfect-information games.
\begin{theorem}
    \label{theo-undecidable}
  Under tree semantics, the model-checking problem for \QCTLi is undecidable.
\end{theorem}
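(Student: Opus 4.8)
The plan is to prove undecidability by reducing from a known undecidable problem about imperfect-information games, as the paper itself hints that \QCTLi can express the existence of winning strategies in such games. The natural source of undecidability is the distributed synthesis problem, or equivalently the problem of deciding whether a coalition of agents with incomparable (non-hierarchical) observations has a joint winning strategy in a multiplayer game with imperfect information; this is known to be undecidable from the work of Peterson, Reif, and Azhar, and from Pnueli--Rosner-style distributed synthesis results (\cite{peterson2001lower,pinchinat2005decidable,kupermann2001synthesizing}). The key point that makes the reduction possible is precisely what distinguishes the tree semantics from the structure semantics: in the tree semantics a quantifier $\existsp{\obs}$ chooses a truth value for $p$ uniformly with respect to $\obs$-indistinguishability \emph{of finite paths}, which is exactly a perfect-recall strategy that observes only the $\obs$-components of the history.

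\medskip

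\noindent\textbf{First I would} set up the encoding of strategies by propositions. Given a game arena modelled as a \CKS, a strategy for an agent with observation $\obs$ is a function from $\obs$-indistinguishability classes of histories to actions. To encode such a strategy with a single quantifier, I would let the agent's available actions be witnessed by fresh atomic propositions $p_{a}$, one per action $a$, and use quantifiers $\existsp[p_{a}]{\obs}$ restricted to the agent's observation $\obs$. The \obs-uniformity constraint of Definition~\ref{def-obs-trees} forces the labelling to respect $\oequivt$, i.e.\ to depend only on the $\obs$-observable part of the history, which is exactly the imperfect-information, perfect-recall requirement. I would then write a \CTL (indeed \QCTLi) formula, using the standard functionality and totality conditions (``exactly one $p_{a}$ holds at each node'') together with an outcome condition $\A(\text{consistency} \rightarrow \text{winning})$, asserting that the quantified propositions encode a genuine strategy and that every path consistent with all the chosen strategies satisfies the winning objective.

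\medskip

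\noindent\textbf{The heart of the reduction} is to take a two-agent game with \emph{incomparable} observations $\obs_{1}, \obs_{2} \subseteq [n]$ (so that neither observes a superset of the other) and express, by a formula $\existsp[p_{a_1}]{\obs_1}\existsp[p_{a_2}]{\obs_2}\,\phi$, that agents $1$ and $2$ have a joint winning strategy for a safety or reachability objective. Because the two observations are non-hierarchical, this is the setting known to be undecidable for distributed synthesis and multiplayer imperfect-information games; \QCTLi on the tree unfolding captures exactly the perfect-recall strategies required, so a winning distributed strategy exists if and only if the formula holds in the unfolding. I would verify that the CKS encoding of the game, the state $\state$, and the formula $\phi$ are all computable in time polynomial (or at least effectively) in the game description, which establishes the many-one reduction.

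\medskip

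\noindent\textbf{The main obstacle} I anticipate is faithfully capturing the \emph{joint} consistency of the two independently-quantified strategies with a single common path quantifier, while ensuring that the uniformity constraints on $p_{a_1}$ and $p_{a_2}$ correspond to the two agents observing \emph{only} their own components and nothing about each other's chosen actions. In particular one must be careful that the quantifiers are nested with observations that are mutually incomparable (otherwise, by the hierarchical fragment identified later in the paper, the problem would become decidable), and that the safety/reachability winning condition is expressible as a \CTL path property over the synchronous product. Once the encoding correctly forces each $p_{a_i}$ to depend only on the $\obs_i$-history and the outcome formula quantifies over all jointly-consistent infinite paths, correctness of the reduction follows from the correctness of the underlying undecidable game problem, and undecidability of \QCTLi model checking under tree semantics is immediate.
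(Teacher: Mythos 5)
Your proof is essentially correct but takes a genuinely different route from the paper. The paper's own proof is a two-line corollary of work it has already done: Lemma~\ref{lem-MSOeq-QCTLi} gives an effective translation of \MSOeql into \QCTLi under tree semantics, and the \MSOeql theory of the binary tree is undecidable \cite{lauchli1987monadic}, so model checking \QCTLi on a \CKS whose unfolding is the binary tree is undecidable. You instead reduce directly from distributed synthesis / multiplayer games with imperfect information and incomparable observations \cite{peterson2001lower}, encoding each agent's perfect-recall strategy by a quantifier $\existsp[p_{a}]{\obs}$ whose $\obs$-uniformity constraint on the unfolding is exactly the strategy's information constraint. Both reductions are sound; the paper's is shorter because the heavy lifting (the translation $\transt{\cdot}$) is reused from the expressiveness section, while yours is more self-contained and makes visible \emph{why} non-hierarchical observations are the source of undecidability, which dovetails with the decidable fragment of Section~\ref{sec-decidable}. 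Two details you should nail down: (i) the outcome condition ``every path consistent with the chosen strategies is winning'' is naturally a \CTLs path formula $\A(\G\,\mathit{consistent} \rightarrow \mathit{win})$, which is not in the \CTL-shaped grammar of \QCTLi (Definition~\ref{def-syntax-QCTLi}); you must either invoke the expressive equivalence of \QCTLsi and \QCTLi (Theorem~\ref{prop-expr-qctl-qctls}, which is effective) or quantify an auxiliary proposition marking the consistent subtree; and (ii) since Kripke structures carry no actions, the game arena must be encoded so that the chosen action is recoverable from the successor state (e.g.\ states record the last joint action), so that ``consistency of a path with the $p_{a}$-labelling'' is expressible. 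With those points made explicit, your reduction goes through.
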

\begin{proof}
  The  \MSOeql theory of the binary tree is
  undecidable~\cite{lauchli1987monadic}, and with Lemma~\ref{lem-MSOeq-QCTLi} we obtain a
  reduction to the model-checking problem for \QCTLi.
\end{proof}

\subsubsection{Alternating tree automata}
\label{sec-ATA}

  We briefly recall the notion of alternating (parity) tree automata.
For a set $Z$, $\boolp(Z)$ 
is the set of
formulas built with elements of $Z$ as atomic propositions, using only connectives $\ou$ and
$\et$, 
and with $\top,\perp \in \boolp(Z)$.
An \emph{alternating tree automaton (\ATA) on $(\APf,\Dirtree)$-trees}
is a structure $\auto=(\tQ,\tdelta,\tq_{\init},\couleur)$
where 
$Q$ is a finite set of states, $\tq_{\init}\in \tQ$ is an initial
state, $\tdelta : \tQ\times 2^{\APf} \rightarrow \boolp(\Dirtree\times
\tQ)$ is a transition function, and $\couleur:\tQ\to \setn$ is a
colouring function.  To ease reading we shall write atoms in
transition formulas between brackets, such as
$[x,\tq]$.  
A \emph{nondeterministic tree automaton (\NTA) on
  $(\APf,\Dirtree)$-trees} is an \ATA
$\auto=(\tQ,\tdelta,\tq_{\init},\couleur)$ such that for every $\tq\in
\tQ$ and $a\in 2^{\APf}$, if $\tdelta(\tq,a)$ is written in
disjunctive normal form, then for every direction $\dir\in \Dirtree$,
each disjunct contains exactly one element of $\{\dir\}\times Q$.
The \emph{size} of an ATA is its number of states and its \emph{index}
is its number of different colours.

Because we work with trees that are not necessarily complete as
they represent unfoldings of Kripke structures, we
find it convenient to assume that the state set is partitioned
between $\tQp$ and $\tQm$: when sent in a direction where there is no
node in the input tree, states in $\tQp$ accept immediately while states in $\tQm$
reject immediately\footnote{Note that we could also work only with
  complete trees, with a special symbol labelling missing nodes.}.

  We also  recall the definition of acceptance by \ATA via games
between Eve and Adam. 
Let $\ATA=(\tQ,\tdelta,\tq_\init,\couleur)$ be an \ATA over $(\APf,\Dirtree)$-trees,
 let
$\ltree=(\tree,\lab)$ be such a tree and let $\noeud_\init \in \tree$.
 We
define the parity game $\tgame{\ATA}{\ltree}{\noeud_\init}=(\setpos,\moves,\pos_\init,\couleur')$: the set of
positions is $\setpos=\tree\times \tQ \times \boolp (\Dirtree\times \tQ)$, the
initial position is $\pos_{\init}=(\noeud_\init,\tq_\init,\tdelta(\tq_\init,\noeud_\init))$, and a position
$(\noeud,\tq,\pform)$ belongs to Eve if $\pform$ is of the form $\pform_1\vee
\pform_2$ or $[\dir,\tq']$; otherwise it belongs to
Adam. 
Moves in $\tgame{\tauto}{\ltree}{\noeud_\init}$ are defined by
the following rules:
\[\begin{array}{ll}
 (\noeud,\tq,\pform_1 \;\op\; \pform_2) \move (\noeud,\tq,\pform_i) &
 \mbox{where } 
 \op \in\{\vee,\wedge\} \mbox{ and } i\in\{1,2\},  \\
 \multicolumn{2}{l}{
     (\noeud,\tq,[\dir,\tq']) \move
     \begin{cases}
            (\noeud\cdot
            \dir,\tq',\tdelta(\tq',\lab(\noeud\cdot\dir))) & \mbox{if
            }\noeud\cdot\dir\in\ltree \\
            (\noeud,\tq,\top) & \mbox{if }\noeud\cdot\dir\notin\ltree
            \mbox{ and }\tq\in\tQ^{\top}\\
            (\noeud,\tq,\perp) & \mbox{if }\noeud\cdot\dir\notin\ltree
            \mbox{ and }\tq\in\tQ^{\perp}
     \end{cases}}
\end{array}\]


Positions of the form $(\noeud,\tq,\top)$ and  $(\noeud,\tq,\perp)$
are deadlocks, winning for Eve and Adam respectively.
 The colouring is inherited from the one of the automaton:
 $\couleur'(\noeud,\tq,\pform)=\couleur(\tq)$.

A tree $\ltree$ is \emph{accepted} from node $\noeud$  by
$\tauto$ if Eve has a winning strategy in
$\tgame{\tauto}{\ltree}{\noeud}$, and   we let $\lang(\auto)$ be the
 set of trees  accepted by $\auto$ from their root.

We recall three classic results on tree automata. The first one is
that nondeterministic tree automata are closed under projection, and
was established by
Rabin to deal with second-order monadic quantification:
\begin{theorem}[Projection \cite{rabin1969decidability}]
  \label{theo-projection}
  Given an \NTA $\NTA$ and an atomic
  proposition $p\in\AP$, one can build an \NTA $\proj{\NTA}$ 
 of same
  size and index such that
  $\lang(\proj{\NTA})=\proj{\lang(\NTA)}$. 
\end{theorem}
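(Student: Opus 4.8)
The plan is to build $\proj{\NTA}$ from $\NTA=(\tQ,\tdelta,\tq_{\init},\couleur)$ by letting it guess, on the fly, the truth value of $p$ at each node, while leaving the state set, the initial state, the colouring and the partition $\tQ=\tQp\uplus\tQm$ untouched. Viewing $\proj{\NTA}$ as reading $(\APf\setminus\{p\},\Dirtree)$-trees, that is labels $a\subseteq\APf$ with $p\notin a$, I would set
\[
  \tdelta'(\tq,a)\egdef \tdelta(\tq,a)\ou\tdelta(\tq,a\cup\{p\})
  \qquad\text{for }\tq\in\tQ,\ a\subseteq\APf,\ p\notin a,
\]
and define $\proj{\NTA}\egdef(\tQ,\tdelta',\tq_{\init},\couleur)$. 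Since $\tQ$ and $\couleur$ are unchanged, $\proj{\NTA}$ has the same size and index as $\NTA$, and it is still an \NTA: writing $\tdelta'(\tq,a)$ in disjunctive normal form simply collects the disjuncts of $\tdelta(\tq,a)$ and of $\tdelta(\tq,a\cup\{p\})$, each of which already contains exactly one atom per direction.

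Correctness is the equality $\lang(\proj{\NTA})=\proj{\lang(\NTA)}$, and both inclusions are cleanest through the run-based characterization of \NTA acceptance (which coincides with the game semantics by positional determinacy of parity games): an accepting run is a map $r:\tree\to\tQ$ sending the root to $\tq_{\init}$, satisfying at every node a disjunct of the relevant transition, and meeting the parity condition on every branch. For $\proj{\lang(\NTA)}\subseteq\lang(\proj{\NTA})$, take $\ltree=(\tree,\lab)\in\lang(\NTA)$ with accepting run $r$; the same $r$ is accepting for $\proj{\NTA}$ on $\proj{\ltree}$, because at each node $\noeud$ the disjunct of $\tdelta(r(\noeud),\lab(\noeud))$ picked by $r$ is, since $\proj{\lab}(\noeud)=\lab(\noeud)\setminus\{p\}$, also a disjunct of $\tdelta'(r(\noeud),\proj{\lab}(\noeud))=\tdelta(r(\noeud),\proj{\lab}(\noeud))\ou\tdelta(r(\noeud),\proj{\lab}(\noeud)\cup\{p\})$ (the left one if $p\notin\lab(\noeud)$, the right one otherwise). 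As neither the domain nor the colouring change, the parity condition still holds on every branch.

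For the converse $\lang(\proj{\NTA})\subseteq\proj{\lang(\NTA)}$, take $\ltree'=(\tree,\lab')\in\lang(\proj{\NTA})$ with accepting run $r$. At each node $\noeud$ the disjunct satisfied by $r$ in $\tdelta'(r(\noeud),\lab'(\noeud))$ comes either from $\tdelta(r(\noeud),\lab'(\noeud))$ or from $\tdelta(r(\noeud),\lab'(\noeud)\cup\{p\})$; set $\lab(\noeud)\egdef\lab'(\noeud)\cup\{p\}$ in the latter case and $\lab(\noeud)\egdef\lab'(\noeud)$ in the former. This produces a tree $\ltree=(\tree,\lab)$ with $\proj{\ltree}=\ltree'$ for which $r$ is, by construction, an accepting run of $\NTA$, whence $\ltree\in\lang(\NTA)$ and $\ltree'\in\proj{\lang(\NTA)}$.

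The single genuine subtlety, and the reason the statement is confined to nondeterministic automata, is the well-definedness of the recovered labelling $\lab$ in the converse direction: a run of an \NTA assigns exactly one state $r(\noeud)$, hence exactly one guessed value of $p$, to each node, so $\lab$ is a bona fide function on $\tree$. For a general alternating automaton several copies in distinct states may visit the same node and disagree on $p$, and no coherent labelling can be read off; this is precisely why projection is performed only after nondeterminization. I expect this consistency point to be the crux, the remainder being routine verification.
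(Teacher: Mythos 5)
Your construction and the key observation are exactly those the paper sketches after the theorem statement: $\proj{\NTA}$ guesses the $p$-label on the fly via $\tdelta'(\tq,a)=\tdelta(\tq,a\setminus\{p\})\ou\tdelta(\tq,a\cup\{p\})$, preserving size and index, and the consistency of the recovered labelling in the converse direction is precisely why nondeterminism is required (the paper makes this same point to motivate applying the simulation theorem before projecting). Your run-based verification of both inclusions fills in routine detail the paper omits, but the approach is the same.
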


Because it will be important to understand the automata construction
for our decision procedure in Section~\ref{sec-decidable}, we briefly recall
that the projected automaton $\proj{\NTA}$ is simply automaton $\NTA$ 
with the only difference that when it reads the label of a node, it
can choose whether $p$ is there or not: if $\tdelta$ is the transition
function of $\NTA$, that of $\proj{\NTA}$ is
$\tdelta'(q,a)=\tdelta(q,a\union \{p\}) \ou
\tdelta(q,a\setminus\{p\})$, for any state $q$ and label $a\in 2^{\APf}$. Another way of seeing it is that
$\proj{\NTA}$ first guesses a $p$-labelling for the input tree, and
then simulates $\NTA$ on this modified input.
To prevent $\proj{\NTA}$ from guessing different labels for a same
node in different executions, it is crucial that $\NTA$ be nondeterministic, reason why we
need the next classic result: 
 the crucial simulation theorem, due to Muller and
Schupp.
\begin{theorem}[Simulation \cite{DBLP:journals/tcs/MullerS95}]
\label{theo-simulation}
Given an \ATA $\ATA$, one can build an \NTA $\NTA$ of exponential size
 and linear index such that $\lang(\NTA)=\lang(\ATA)$.
\end{theorem}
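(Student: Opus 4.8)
The plan is to reduce acceptance by the \ATA to a parity game and then nondeterminise a winning strategy for Eve. Given an \ATA $\ATA=(\tQ,\tdelta,\tq_{\init},\couleur)$ and an input $(\APf,\Dirtree)$-tree $\ltree$, recall that $\ltree\in\lang(\ATA)$ iff Eve wins the parity game $\tgame{\ATA}{\ltree}{\noeud}$ from its initial position. Parity games are memorylessly determined, so whenever Eve wins she has a \emph{positional} strategy, whose choice at a position $(\noeud,\tq,\pform)$ depends only on that position and not on the history of the play. This positionality is what makes the whole construction possible.

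First I would observe that a positional Eve strategy can be recorded locally in the labelling of the tree: at each node $\noeud$ it suffices to specify, for every state $\tq$ and every disjunctive subformula of $\tdelta(\tq,\lab(\noeud))$, which branch Eve picks, and for every atom $[\dir,\tq']$ she reaches, the corresponding direction--successor pair. Reading such an annotation top-down yields, at each node $\noeud$, a set $S_{\noeud}\subseteq\tQ$ of \emph{active} states together with, for each direction $\dir$, the set $S_{\noeud\cdot\dir}$ of states it sends to the child in that direction. This is exactly the shape of a nondeterministic run: one successor set per direction. The \NTA $\NTA$ I build guesses this annotation on the fly, its control states ranging over subsets of $\tQ$, which already accounts for the exponential blow-up in size.

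The remaining task, and the crux of the argument, is to certify that every infinite play consistent with the guessed strategy satisfies the parity condition $\couleur$. Each such play follows a single branch $\pi$ of $\ltree$ and traces, via the strategy, a sequence of states of $\ATA$; the requirement is that \emph{all} such traces along $\pi$ be winning for Eve. This is an $\omega$-regular property of the sequence of annotated labels read along $\pi$: its negation --- the existence of a single rejecting trace --- is recognised by a nondeterministic parity word automaton of the same index as $\ATA$. Determinising and complementing this word automaton gives a \emph{deterministic} parity word automaton $\mathcal{D}$ that $\NTA$ runs along every branch. Because $\mathcal{D}$ is deterministic, its memory propagates uniquely in each direction, so folding it into the control states keeps $\NTA$ nondeterministic; the colour of $\NTA$ is inherited from $\mathcal{D}$.

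The hard part is the quantitative bound rather than the correctness of the simulation. The exponential blow-up in the number of states is already forced by the powerset over $\tQ$ together with the memory of $\mathcal{D}$, but the claimed \emph{linear} index is delicate: it rests on the fact that a nondeterministic parity word automaton with $k$ priorities determinises to a deterministic parity automaton with only $O(k)$ priorities --- the Safra-style, latest-appearance-record analysis --- at the price of exponentially many states. This priority preservation is precisely the technical heart of the Muller--Schupp construction, and guaranteeing that the index of $\NTA$ stays linear in that of $\ATA$, instead of collapsing into an exponential index, is where the real work lies.
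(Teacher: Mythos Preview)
The paper does not prove this theorem at all: it is stated as a classical result and simply attributed to Muller and Schupp via the citation~\cite{DBLP:journals/tcs/MullerS95}. There is therefore nothing in the paper to compare your argument against; the authors treat the simulation theorem as a black box used later in the proof of Lemma~\ref{lem-final}.

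Your sketch is a reasonable outline of the standard construction --- positional determinacy, strategy annotation, and a deterministic word automaton tracking the parity condition along each branch --- and you correctly identify the linear-index bound as the delicate point. One small historical remark: the original Muller--Schupp argument does not go through Safra's determinisation but rather through their own tree-based structure (and the LAR for Muller conditions); the linear-index claim for parity as stated here is typically obtained via later refinements of the simulation theorem. But since the paper itself offers no proof and merely invokes the result, your write-up already goes well beyond what the paper provides.
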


The last one was established by Kupferman and Vardi to deal with
imperfect information aspects in distributed synthesis. The rough idea is
that,  if one just observes
 $\Dirtree$,  uniform  $p$-labellings on 
$\Dirtree\times\Dirtreea$-trees  can be obtained by choosing the
labellings directly on $\Dirtree$-trees, and then lifting them to $\Dirtree\times\Dirtreea$.
\begin{theorem}[Narrowing \cite{kupferman1999church}]
  \label{theo-narrow}
  Given an \ATA $\ATA$ on $\Dirtree\times\Dirtreea$-trees,
  one can build an \ATA $\narrow[\Dirtree]{\ATA}$ on $\Dirtree$-trees
  of same size 
 such that for all $\dira\in\Dirtreea$, $\ltree\in\lang(\narrow[\Dirtree]{\ATA})$ iff $\liftI[\Dirtree\times\Dirtreea]{\dira}{\ltree}\in\lang(\ATA)$.
\end{theorem}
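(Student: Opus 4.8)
The plan is to take $\narrow[\Dirtree]{\ATA}$ to be a syntactic copy of $\ATA=(\tQ,\tdelta,\tq_{\init},\couleur)$ in which every direction $(\dir,c)\in\Dirtree\times\Dirtreea$ occurring in a transition is replaced by its $\Dirtree$-component $\dir$. Concretely, I would introduce the homomorphism $\widetilde{\cdot}:\boolp(\Dirtree\times\Dirtreea\times\tQ)\to\boolp(\Dirtree\times\tQ)$ that leaves $\top,\perp,\vee,\wedge$ unchanged and sends each atom $[(\dir,c),q']$ to $[\dir,q']$, and set $\narrow[\Dirtree]{\ATA}\egdef(\tQ,\tdelta',\tq_{\init},\couleur)$ with $\tdelta'(q,a)\egdef\widetilde{\tdelta(q,a)}$, keeping the same state set and its partition $\tQ=\tQp\uplus\tQm$, the same initial state and the same colouring. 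In particular $\narrow[\Dirtree]{\ATA}$ has the same size (and index) as $\ATA$, and it does not depend on $\dira$, which is what the statement requires since the equivalence must hold uniformly for all $\dira\in\Dirtreea$.

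Correctness rests on two elementary properties of the lift. First, for every node $\hat\noeud$ of $\liftI[\Dirtree\times\Dirtreea]{\dira}{\ltree}$ we have $\lab'(\hat\noeud)=\lab(\projI[\Dirtree]{\hat\noeud})$, so the label read by $\ATA$ depends only on the $\Dirtree$-projection of the current node. Second, for every $\dir\in\Dirtree$ and $c\in\Dirtreea$ the child $\hat\noeud\cdot(\dir,c)$ exists in the lifted tree iff $\projI[\Dirtree]{\hat\noeud}\cdot\dir$ exists in $\ltree$, independently of $c$; hence the set of available directions, and the $\tQp/\tQm$ accept-or-reject behaviour on missing directions, coincide with those of $\narrow[\Dirtree]{\ATA}$ at the projected node.

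Building on these, I would compare the two acceptance games through the map $\Pi$ sending a position $(\hat\noeud,q,\pform)$ of $\tgame{\ATA}{\liftI[\Dirtree\times\Dirtreea]{\dira}{\ltree}}{(\racine,\dira)}$ to the position $(\projI[\Dirtree]{\hat\noeud},q,\widetilde{\pform})$ of $\tgame{\narrow[\Dirtree]{\ATA}}{\ltree}{\racine}$, where $\racine$ is the root of $\ltree$. Since $\widetilde{\cdot}$ is a formula homomorphism, ownership is preserved (a position belongs to Eve iff $\pform$ is a disjunction or a single atom, a shape invariant under $\widetilde{\cdot}$), the colouring is preserved (it depends only on $q$), and the initial position is mapped to the initial position; using the two properties above one checks that $\Pi$ is a morphism of arenas, every lifted move mapping to a legal narrowed move, and every narrowed move being the image of some lifted move (the latter requiring a choice of a $\Dirtreea$-component $c$).

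The heart of the argument, and the step I expect to be most delicate, is transferring winning strategies across $\Pi$, precisely because $\Pi$ is many-to-one on nodes: two lifted nodes with the same $\Dirtree$-projection differ only in their $\Dirtreea$-history, which by the first property influences neither future labels nor available directions, so they have isomorphic residual games and are interchangeable for Eve. In the direction ``narrowed wins $\Rightarrow$ lifted wins'' this is painless: from an Eve strategy $\sigma_N$ one defines $\sigma_L$ by projecting the current lifted play prefix via $\Pi$, querying $\sigma_N$, and re-lifting its move along the canonical correspondence between the subformulas of $\tdelta'(q,a)=\widetilde{\tdelta(q,a)}$ and those of $\tdelta(q,a)$. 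In the converse direction one fixes an arbitrary $c_0\in\Dirtreea$ and lifts each narrowed play prefix by reading every direction $\dir$ as $(\dir,c_0)$, runs Eve's lifted strategy, and projects its move back; the interchangeability of $\Pi$-equivalent positions guarantees this yields a well-defined narrowed strategy. In both cases $\Pi$-corresponding plays carry identical state sequences, hence identical colour sequences, so the parity condition holds on one side iff it holds on the other. Therefore Eve wins from the initial position of one game iff she wins from that of the other, which is exactly $\ltree\in\lang(\narrow[\Dirtree]{\ATA})$ iff $\liftI[\Dirtree\times\Dirtreea]{\dira}{\ltree}\in\lang(\ATA)$.
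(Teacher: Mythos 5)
Your construction---erasing the $\Dirtreea$-component from every atom of every transition formula and arguing correctness via a projection between the two acceptance games---is exactly the narrowing of Kupferman and Vardi that the paper imports without proof, and your explicit handling of the $\tQp/\tQm$ behaviour on missing children is precisely the ``straightforward transfer'' to non-complete trees that the paper's remark after the theorem alludes to. The only step to phrase carefully is the lifted-to-narrowed strategy transfer: since a winning strategy need not win from positions it never visits, you should take Eve's lifted strategy to be positional and winning on her entire winning region (as parity games permit), after which your fixed-$c_{0}$ re-lifting argument goes through verbatim.
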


In fact the result in \cite{kupferman1999church} is stated for
$\ltree$ (and thus also $\liftI[\Dirtree\times\Dirtreea]{}{\ltree}$) a complete  tree, but the proof transfers
straightforwadly to this slightly more general result.

\subsubsection{A decidable fragment: hierarchy on observations}
\label{sec-decidable}

We turn to our main result,
which is the identification of an important decidable fragment.

\begin{definition}
  \label{def-hierarchical}
  A \QCTLsi formula $\phi$ is \emph{hierarchical} if for all
  subformulas $\phi_{1},\phi_{2}$ of the form
  $\phi_{1}=\existsp[p_{1}]{\obs_{1}}\phi'_{1}$ and
  $\phi_{2}=\existsp[p_{2}]{\obs_{2}}\phi'_{2}$ where  
  $\phi_{2}$
  is a subformula of $\phi'_{1}$, we have $\obs_{1}\subseteq\obs_{2}$.
\end{definition}

In other words, a formula is hierarchical if innermost propositional
quantifiers observe at least as much as  outermost ones.
We let \QCTLsih be the set of hierarchical \QCTLsi formulas.

\begin{theorem}
  \label{theo-decidable}
Under tree semantics, model checking \QCTLsih is non-elementary decidable.
\end{theorem}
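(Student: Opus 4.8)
The plan is to proceed by induction on the structure of a hierarchical formula $\phi$, building for each state subformula an alternating tree automaton that recognizes exactly the unfoldings satisfying it, and then checking non-emptiness of the automaton run on the finite input $\CKS$. Concretely, I would reduce model checking to the following stronger statement: given a hierarchical \QCTLsi formula $\phi$ and an observation budget, one can construct an \ATA $\auto_\phi$ over $\Dirtreei[{[n]}]$-trees (suitably narrowed) such that $\unfold{\state}\modelst\phi$ iff $\unfold{\state}$ is accepted by $\auto_\phi$. Since non-emptiness of the product of $\auto_\phi$ with the finite \CKS $\CKS$ is decidable, this yields decidability; the non-elementary blowup comes from the repeated applications of the simulation theorem (Theorem~\ref{theo-simulation}), each of which incurs an exponential cost in size.

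First I would handle the \CTLs skeleton in the usual way: the boolean, temporal, and path-quantifier cases translate into alternating tree automata by the standard \CTLs-to-\ATA construction, and I can assume by induction that each immediate state subformula of the form $\existsp{\obs_i}\phi_i$ has already been replaced by a fresh atomic proposition $q_i$ marking exactly the nodes where it holds. The crux is the inductive case for the imperfect-information quantifier $\existsp[p]{\obs}\phi'$. Here I would exploit the hierarchical condition: because every quantifier nested inside $\phi'$ observes at least as much as $\obs$, I can assume $\obs$ is the \emph{coarsest} observation in the relevant scope, which lets me reorder the local-state components so that $\Dirtreei[{[n]}] = \Dirtreei[\obs] \times \Dirtreei[{[n]\setminus\obs}]$ with $\Dirtreei[\obs]$ as the ``visible'' part. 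The uniformity constraint on the tree semantics (Definition~\ref{def-obs-trees}) says precisely that the chosen $p$-labelling must depend only on the $\obs$-projection of each node's history. This is exactly the situation the narrowing theorem (Theorem~\ref{theo-narrow}) is designed for.

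Concretely, the construction for $\existsp[p]{\obs}\phi'$ proceeds in three steps. By induction I have a \emph{nondeterministic} tree automaton $\NTA'$ over $\Dirtreei[{[n]}]$-trees for $\phi'$ (obtained by applying simulation, Theorem~\ref{theo-simulation}, to the \ATA produced for $\phi'$, which is legitimate because $\phi'$ is again hierarchical). I then apply the projection theorem (Theorem~\ref{theo-projection}) to obtain $\NTA'_{\exists p}$, whose accepted trees are exactly those whose $p$-projection is accepted by $\NTA'$ — this captures guessing \emph{some} $p$-labelling. The issue is that projection on $\Dirtreei[{[n]}]$-trees allows the label of $p$ to depend on the full node history, not just its $\obs$-projection, so it encodes the perfect-information quantifier rather than the $\obs$-constrained one. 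To enforce uniformity I narrow the automaton from $\Dirtreei[{[n]}]$-trees down to $\Dirtreei[\obs]$-trees via $\narrow[{\Dirtreei[\obs]}]{\cdot}$, project $p$ at the level of $\Dirtreei[\obs]$-trees (where any guessed labelling is automatically $\obs$-uniform), and then relate acceptance back to the original $\Dirtreei[{[n]}]$-tree through the lift characterization in Theorem~\ref{theo-narrow}. The hierarchical hypothesis is what guarantees that all subsequent inner quantifiers still observe $\obs$ or more, so narrowing away the $[n]\setminus\obs$ components does not destroy the well-definedness of the automata built for those inner quantifiers.

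The main obstacle, and the delicate technical point, is establishing that the narrowing–projection combination correctly enforces $\obs$-uniformity of $p$ \emph{while remaining sound for the nested quantifiers of different (larger) observations}. One must verify that a $p$-labelling guessed on the narrowed $\Dirtreei[\obs]$-tree, once lifted back via $\liftI[{\Dirtreei[{[n]}]}]{}{\cdot}$, is precisely an $\obs$-uniform labelling in the sense of Definition~\ref{def-obs-trees}, and conversely that every $\obs$-uniform labelling arises this way — this is where the synchronous perfect-recall semantics must be matched against Kupferman–Vardi's narrowing, whose original formulation is for complete trees and must be transferred to unfoldings of \CKSs (as noted after Theorem~\ref{theo-narrow}). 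I would carefully track the partition of automaton states into $\tQp$ and $\tQm$ through narrowing, since the input trees are incomplete. Finally, I would tally the complexity: each quantifier alternation forces one simulation step (exponential), so the tower of exponentials has height equal to the quantifier-nesting depth, giving the non-elementary bound; decidability itself follows because the final automaton can be intersected with the (finite) structure $\CKS$ and tested for non-emptiness.
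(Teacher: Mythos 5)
Your overall strategy --- induction on the formula, with narrow-then-simulate-then-project for each hierarchical quantifier, and the observation that hierarchy is exactly what makes the narrowing step legitimate --- is the paper's strategy. But the inductive invariant you propose (``$\unfold{\state}\modelst\phi$ iff $\unfold{\state}$ is accepted by $\auto_\phi$'') is not strong enough to carry the induction, and fixing it is where the real work lies. Two steps break. First, the projection theorem changes the $p$-labelling of the \emph{input tree}; to conclude anything from ``$\ltree_{p}\in\lang(\NTA')$ for some $\ltree_{p}\Pequiv\ltree$'' you need the induction hypothesis to hold for \emph{arbitrary} input trees carrying candidate valuations of the quantified propositions, not just for the single tree $\unfold{\state}$. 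Second, after narrowing down to $\Dirtreei[\obs]$-trees, a node of the input no longer determines a state of $\CKS$: distinct states with the same $\obs$-projection are merged, so the automaton can no longer read the labelling of the free atomic propositions from its input, nor follow the transition relation of $\CKS$ in the temporal cases. Your proposal does not say where this information comes from.

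The paper resolves both issues at once with Lemma~\ref{lem-final}: it builds a \emph{family} of automata $\bigauto[\state]{\phi}$ indexed by the states of $\CKS$, running on $(\APq,\Dirtreei[\phi])$-trees that carry only the quantified propositions, and the invariant is that $\ltree\in\lang(\bigauto[\state]{\phi})$ iff $\liftI[{[n]}]{}{\ltree}\merge\;\unfold{\state}\modelst\phi$ --- the input is lifted back to full dimension and \emph{merged} with the unfolding, which supplies the $\APfree$-labelling. In the $\E\psi$ case the automaton tracks the current state of $\CKS$ inside its own state space, guessing a path of $\CKS$ while following only its $\Dirtreei[\phi]$-projection in the input tree, and it launches the state-indexed sub-automata $\bigauto[\state']{\phi_{i}}$ and their duals to verify guessed truth values of maximal state subformulas --- rather than assuming each quantified subformula has been ``replaced by a fresh proposition marking exactly the nodes where it holds,'' which is not a finite operation on an infinite unfolding. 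Without this strengthened, state-indexed invariant and the merge construction your induction does not close; with them, your outline becomes essentially the paper's proof.
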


In order to prove this we establish Lemma~\ref{lem-final} below, but
  we first introduce a few more notations.
For every $\phi\in\QCTLsi$, we let $\Iphi\egdef \biginter_{\obs\in\setobs}\obs$, where $\setobs$ is the set of observations
  that occur in $\phi$, with the intersection over the empty set 
  defined as $[n]$. We also let $\Dirtreei[\phi]\egdef
  \Dirtreei[\Iphi]$ (recall that for $I\subseteq [n]$,
  $\Dirtreei=\bigtimes_{i\in I}\setlstates_{i}$).
We will need a final important definition.

\begin{definition}[Merge]
  \label{def-merge}
Let
 $\ltree=(\tree,\lab)$ be an
$(\APf,\Dirtree)$-tree and  $\ltree'=(\tree',\lab')$ an
$(\APf\,',\Dirtree)$-tree. We
 define the \emph{merge} of $\ltree$ and $\ltree'$
 as the $(\APf\union\APf\,')$-labelled $\Dirtree$-tree $\ltree\merge\ltree'\egdef
(\tree\cap\tree',\lab'')$, where
$\lab''(\noeud)=\lab(\noeud) \union \lab'(\noeud)$.
\end{definition}

We explain the  idea behind this definition. In our decision procedure,
quantification on atomic propositions is performed by means of
automata projection (see Theorem~\ref{theo-projection}). But in order to obtain
uniform labellings for these propositions, we need 
to first narrow down our automata and our trees (see Theorem~\ref{theo-narrow}), and in this process we lose
information on the labelling of atomic propositions in the \CKS $\CKS$ on
which we evaluate the formula. 
To address this problem, first we assume without loss of generality that propositions that are quantified upon in $\Phi$
do not appear free in $\Phi$. We can then partition propositions in $\Phi$ between
those that are quantified upon, $\APq$, and those that
appear free, $\APfree$. We
 use the input tree 
 of the automaton we build to carry the labelling
for $\APq$, and in the end the
input tree is merged with the unfolding of $\CKS$ that carries
the labelling to evaluate propositions in $\APfree$.



\newcounter{lem-final}
\setcounter{lem-final}{\value{theorem}}
  \begin{lemma}
    \label{lem-final}
    Let $\Phi\in\QCTLsih$ 
with    $\APq$ and $\APfree$ defined as above, and let $\CKS$ be a finite \CKS over $\APfree$. For every subformula
    $\phi$ of $\Phi$ and state $\state$ of $\CKS$, one can build an
    \ATA $\bigauto[\state]{\phi}$ on $(\APq,\Dirtreei[\phi])$-trees
    such that for every 
    $(\APq,\Dirtreei[\phi])$-tree $\ltree$ rooted in $\projI[{\Dirtreei[\phi]}]{\state}$,
    \begin{equation*}
      \ltree\in\lang(\bigauto[\state]{\phi}) \mbox{\;\;\;iff\;\;\;}
      \liftI[{[n]}]{\dira}{\ltree}\merge\;\unfold{\state} \modelst \phi,
      \mbox{\;\;\; where }\dira=\projI[{[n]\setminus I_{\phi}}]{\state}.
    \end{equation*}

  \end{lemma}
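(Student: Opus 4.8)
The plan is to proceed by induction on the structure of the subformula $\phi$, building $\bigauto[\state]{\phi}$ from the automata associated with its immediate subformulas. A key invariant throughout is that the automaton keeps the state of $\CKS$ currently being simulated in its control: since $\ltree$ only records $\Iphi$-projections of states as directions, while several states of $\CKS$ may share the same projection, the automaton cannot recover the $\APfree$-labelling from $\ltree$ alone. Tracking the current $\CKS$-state (starting at $\state$ and following $\relation$ when moving to children) lets the automaton both supply the missing $\APfree$-labels and restrict its run to the genuine paths of $\CKS$, which is exactly the effect of merging with $\unfold{\state}$. This is why the construction is indexed by $\state$, and why I build the whole family $\{\bigauto[\state']{\psi}\}_{\state'}$, over all states $\state'$ of $\CKS$, at once.

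For the atomic, Boolean and temporal cases I would combine the classical translation of \CTLs into alternating tree automata with the narrowing operation (Theorem~\ref{theo-narrow}). An atomic proposition in $\APq$ is read from the label of $\ltree$, while one in $\APfree$ is decided from the tracked $\CKS$-state; for $\neg \phi'$ I dualise $\bigauto[\state]{\phi'}$. The point needing care is that distinct subformulas live over distinct direction sets: since $\Iphi\subseteq I_{\phi_i}$ for every immediate subformula $\phi_i$, I first narrow each $\bigauto[\state]{\phi_i}$ from $\Dirtreei[\phi_i]$-trees down to $\Dirtreei[\phi]$-trees, and only then combine them (by union, by conjunction, or inside the standard \CTLs product for $\E\psi$ and the temporal operators). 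Correctness follows from the induction hypothesis together with the routine fact that lifts compose, namely $\liftI[{[n]}]{\dira''}{\liftI[{\Dirtreei[\phi_i]}]{\dira'}{\ltree}}=\liftI[{[n]}]{(\dira',\dira'')}{\ltree}$ whenever the projections match those of $\state$.

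The heart of the proof is the quantifier case $\phi=\existsp[p]{\obs}\phi'$, where the hierarchical hypothesis is crucial: every observation occurring in $\phi'$ contains $\obs$, so $\obs\subseteq I_{\phi'}$ and hence $\Iphi=\obs\cap I_{\phi'}=\obs$, giving $\Dirtreei[\phi]=\Dirtreei[\obs]$. Because $\obs\subseteq I_{\phi'}$, the direction set factors as $\Dirtreei[\phi']=\Dirtreei[\obs]\times\Dirtreei[{I_{\phi'}\setminus\obs}]$, which is precisely the shape required to apply narrowing. I therefore narrow $\bigauto[\state]{\phi'}$ to an \ATA over $\Dirtreei[\obs]$-trees (Theorem~\ref{theo-narrow}), turn it into an equivalent \NTA by simulation (Theorem~\ref{theo-simulation}), and project out $p$ (Theorem~\ref{theo-projection}), which amounts to guessing a fresh $p$-labelling before simulating. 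The force of this is that projecting \emph{after} narrowing guesses $p$ directly on the coarse $\Dirtreei[\obs]$-tree; lifting it back to $\Dirtreei[{[n]}]$ assigns the same value of $p$ to all nodes sharing an $\obs$-projection, i.e.\ it yields exactly an $\obs$-uniform labelling in the sense of Definition~\ref{def-obs-trees}. Conversely, since $\obs$-indistinguishable nodes of the merged tree are precisely those with the same image in $\ltree$, every $\obs$-uniform $p$-labelling of the merged tree factors through that image and thus arises from some $p$-labelling of $\ltree$.

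Assembling these equivalences, for a given $\ltree$ the projected automaton accepts iff some $p$-variant $\ltree_p$ of $\ltree$ is accepted by the narrowed automaton, iff (narrowing, instantiated at $\dira'=\projI[{I_{\phi'}\setminus\obs}]{\state}$) the lift $\liftI[{\Dirtreei[\phi']}]{\dira'}{\ltree_p}$ is accepted by $\bigauto[\state]{\phi'}$, iff (induction hypothesis and lift composition) $\liftI[{[n]}]{\dira}{\ltree_p}\merge\unfold{\state}\modelst\phi'$ with $\dira=\projI[{[n]\setminus I_\phi}]{\state}$, where $\liftI[{[n]}]{\dira}{\ltree_p}$ is an $\obs$-uniform $p$-variant of $\liftI[{[n]}]{\dira}{\ltree}$; by the bijection above this is exactly $\liftI[{[n]}]{\dira}{\ltree}\merge\unfold{\state}\modelst\existsp[p]{\obs}\phi'$, as required. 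I expect the main obstacle to be this quantifier case, specifically making the bijection between $p$-labellings of $\Dirtreei[\obs]$-trees and $\obs$-uniform labellings of the merged tree precise, and checking that narrowing interacts correctly with the $\CKS$-state tracking carried in the control. The hierarchical condition is exactly what guarantees that narrowing can always be applied, which in turn is why the unrestricted logic is undecidable (Theorem~\ref{theo-undecidable}).
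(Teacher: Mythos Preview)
Your proposal is correct and follows essentially the same approach as the paper: induction on $\phi$, tracking the current $\CKS$-state in the automaton's control, narrowing subformula automata down to $\Dirtreei[\phi]$-trees before combining them, and handling $\existsp[p]{\obs}\phi'$ via the narrow--nondeterminise--project sequence enabled by the hierarchical condition $\obs\subseteq I_{\phi'}$. The paper's proof spends most of its effort on the $\E\psi$ case (building the word automaton for $\psi$ over $\max(\psi)$, then guessing a path in $\CKS$ while launching narrowed copies of the $\bigauto[\state']{\phi_i}$ to verify the guessed truth values), and on making precise the bijection you identify in the quantifier case---both points you correctly flag as the places requiring care.
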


For an $\Dirtreei[I]$-tree $\ltree$, from now on $\liftI[{[n]}]{}{\ltree}\merge
\;\unfold{\state}$ 
 stands for       $\liftI[{[n]}]{\dira}{\ltree}\merge
\;\unfold{\state}$, where $\dira=\projI[{[n]\setminus I}]{\state}$.

\begin{proof}[Proof] 
  Let $\Phi\in \QCTLsih$, and let
  $\APq$ (resp. $\APfree$) be the set of atomic propositions that are
  quantified upon (resp. that appear free) in
  $\Phi$. Modulo renaming of atomic propositions, we can assume without loss of generality that $\APq$ and
  $\APfree$ are disjoint.   Let $\CKS=(\setstates,\relation,\labS)$
  be a finite  \CKS over $\APfree$.
    For each state $\state\in\setstates$ and each
  subformula $\phi$ of $\Phi$ (note that all subformulas of $\Phi$ are also hierarchical), we define
  by induction on $\phi$ the \ATA
  $\bigauto{\phi}$. The definition builds upon the classic construction for \CTLs
  from~\cite{DBLP:journals/jacm/KupfermanVW00}.
  
\begin{description}

\item[$\phi=p$:] We let $\bigauto{p}$ be the \ATA over $\Dirtreei[{[n]}]$-trees with one unique
  state $\tq_\init$, with transition function defined as follows:
  \[\tdelta(\tq_\init,a)=
  \begin{cases}
    \top  & \mbox{if } (p\in\APfree \mbox{ and }p\in\labS(\state))  \mbox{ or
    }(p\in \APq \mbox{ and }p\in a)\\
    \perp & \mbox{if }  (p\in\APfree \mbox{ and }p\notin\labS(\state))  \mbox{ or
    } (p\in \APq \mbox{ and }p\notin a)
  \end{cases}
\]
The idea is that since we know the state $\state\in\setstates$ in
which we want to evaluate the formula, we can read the labelling for
atomic propositions in $\APfree$ (those that are not quantified upon)
directly from $\state$. However, for propositions in $\APq$, we need
to read them from the input tree. Indeed, if $p\in\APq$ it means that
$p$ is quantified upon in $\Phi$: there is a subformula
$\existsp{\obs} \phi$ of $\Phi$ such that $p$ is a subformula of
$\phi$. The automaton $\bigauto{\existsp{\obs}\phi}$ will be built by
narrowing, nondeterminising and projecting $\bigauto{\phi}$ on $p$. On
a given input tree $\ltree$, $\bigauto{\existsp{\obs}\phi}$ will thus
guess a labelling for $p$ in each node of $\ltree$ and simulate
(the nondeterminised narrowing of) $\bigauto{\phi}$ on this modified
input. $\bigauto{\phi}$ must therefore read the
labelling for $p$ from its input tree.

\item[$\phi=\neg \phi'$:] We obtain $\bigauto{\phi}$  by
  dualising  $\bigauto{\phi'}$, which is a classic operation on \ATA{}s.

\item[$\phi=\phi_1\ou\phi_2$:] Because
  $\Iphi=\Iphi[\phi_{1}]\cap\Iphi[\phi_{2}]$, and each
  $\bigauto{\phi_{i}}$ works on $\Dirtreei[\phi_{i}]$-trees, we need to
  narrow them so that they work on $\Dirtreei[\phi]$-trees: 
  for $i\in \{1,2\}$, we let $\ATA_{i}\egdef
  \narrow[\Iphi]{\bigauto{\phi_{i}}}$.
  
We then build $\bigauto{\phi}$ by taking the disjoint union of $\ATA_{1}$ and $\ATA_{2}$
  and adding a new initial state that nondeterministically chooses
  which of $\ATA_{1}$ or $\ATA_{2}$ to execute on the input tree, so
  that $\lang(\bigauto{\phi})=\lang(\ATA_{1})\union\lang(\ATA_{2})$.

\item[$\phi=\E\psi$:]
  The aim is to build an automaton $\bigauto{\phi}$ that works on
  $\Dirtreei[\phi]$-trees and that on input $\ltree$,  checks for the
  existence of a path in
  $\liftI[{[n]}]{}{\ltree}\merge\;\unfold{\state}$ that
  satisfies $\psi$.
  To do so,  $\bigauto{\phi}$  guesses a path $\tpath$ in $(\CKS,\state)$.
It  remembers the current state in $\CKS$, which provides the
  labelling for atomic propositions in $\APfree$, and while it guesses
  $\tpath$ it follows its projection on $\Dirtreei[\phi]$ in its input
  tree $\ltree$,
 reading the labels  to evaluate propositions in
  $\APq$.
  
Let $\max(\psi)=\{\phi_1,\ldots,\phi_n\}$ be the
  set of maximal state sub-formulas of $\psi$.
  In a first step we
  see these maximal state sub-formulas as atomic propositions. Formula
  $\psi$
  can thus be seen as an \LTL formula, and we can build 
  a nondeterministic
  parity word automaton
  $\autopsi=(\Qpsi,\Deltapsi,\qpsi_\init,\couleurpsi)$ over alphabet $2^{\max(\psi)}$
  that accepts
  exactly the models of $\psi$.  
  We define the
  \ATA
  $\tauto$  that, given as
  input a $(\max(\psi),\Dirtreei[\phi])$-tree $\ltree$,
  nondeterministically guesses a
path $\tpath$ in   $\liftI[{[n]}]{}{\ltree}\merge\;\unfold{\state}$  and
  simulates $\autopsi$ on it, assuming that the labels it reads
  while following $\projI[{\Dirtreei[\phi]}]{\tpath}$
  in
  its input correctly represent the truth value of formulas in
  $\max(\psi)$ along $\tpath$. 
Recall that $\CKS=(\setstates,\relation,\labS)$; we define
 $\tauto\egdef(\tQ,\tdelta,\tq_{\init},\couleur)$, where
\begin{itemize}
\item $\tQ=\Qpsi\times\setstates$, 
\item $\tq_{\init}=(\qpsi_{\init},\state)$,
\item $\couleur(\qpsi,\state')=\couleurpsi(\qpsi)$, and
\item  for each $(\qpsi,\state')\in\tQ$
  and $a\in 2^{\max(\psi)}$, 
  \[\tdelta((\qpsi,\state'),a)=\bigvee_{\tq'\in\Deltapsi(\qpsi,a)}\bigvee_{
    \state''\in\relation(\state')}[\projI[{\Dirtreei[\phi]}]{\state''},\left(\tq',\state''\right)].\]
\end{itemize}

The intuition is that $\tauto$ reads the current label, chooses nondeterministically
which transition to take in $\autopsi$, chooses a next state in $\CKS$
and proceeds in the corresponding direction in $\Dirtree_{\phi}$. 
To ensure\footnote{Actually this is not very important since the tree $\ltree$ on which our
  automata will work will always be such that the domain of
  $\liftI[{[n]}]{}{\ltree}$ contains the domain of $\unfold{\state}$.} that the path it guesses is not only in
$\unfold{\state}$ but also in
$\liftI[{[n]}]{}{\ltree}$, it is
enough to make sure that it always tries to stay inside its input tree $\ltree$,
which is achieved by letting
$\tQ^{\top}=\emptyset$ and $\tQ^{\perp}=\tQ$.
Thus, $\tauto$
 accepts exactly the $\max(\phi)$-\labeled $\Dirtreei[\phi]$-trees $\ltree$
 in which there exists a path that corresponds to some path in
 $\liftI[{[n]}]{}{\ltree}\merge\; \unfold{\state}$  that satisfies
 $\psi$, where maximal state formulas are considered as atomic propositions. 

Now from $\tauto$ we build the automaton $\bigauto{\phi}$ over
$\Dirtreei[\phi]$-trees labelled with real atomic propositions in
$\APq$.
In each node it visits, this automaton guesses what should be its
labelling over $\max(\psi)$, it simulates $\tauto$
accordingly, and checks
that the guesses it makes are correct.
If the path being guessed in $\liftI[{[n]}]{}{\ltree}\merge\;\unfold{\state}$
is currently in node $\noeud$ ending with state $\state'$, and
$\bigauto{\phi}$ guesses that $\phi_{i}$ holds in $\noeud$,
it checks this guess by starting a
 simulation of automaton $\bigauto[\state']{\phi_{i}}$ from node
 $\noeuda=\projI[{\Dirtreei[\phi]}]{\noeud}$ in its input $\ltree$.

For each $\state'\in\CKS$ and each $\phi_{i}\in\max(\psi)$ we first
build $\bigauto[\state']{\phi_i}$, which works on $\Dirtreei[\phi_i]$-trees. 
 Observe that $\Iphi[\phi]=\inter_{i=1}^n \Iphi[\phi_i]$, so that we need to
narrow down these automata:
We let $\ATA^i_{\state'}\egdef\narrow[{\Iphi[\phi]}]{\bigauto[\state']{\phi_i}}
=(\tQ^{i}_{\state'},\tdelta^{i}_{\state'},\tq^{i}_{\state'},\couleur^{i}_{\state'})$.
We also let
$\compl{\ATA^{i}_{\state'}}=(\compl{\tQ^{i}_{\state'}},\compl{\delta^{i}_{\state'}},\compl{\tq^{i}_{\state'}},\compl{\couleur^{i}_{\state'}})$
be its dualisation, and we assume w.l.o.g. that all the state sets are
pairwise disjoint. 
We define the \ATA $\bigauto{\phi}=(\tQ\cup
\bigcup_{i,\state'} \tQ^{i}_{\state'} \cup
\compl{\tQ^{i}_{\state'}},\tdelta',\tq_{\init},\couleur')$, where the
colours of states are left as they were in their original automaton,
and $\tdelta$ is defined as follows. For states in $\tQ^{i}_{\state'}$
(resp. $\compl{\tQ^{i}_{\state'}}$), $\tdelta$ agrees with $\tdelta^{i}_{\state'}$
(resp. $\compl{\delta^{i}_{\state'}}$), and for $(\qpsi,\state')\in \tQ$ and $a\in
2^{\APq}$ we let
\[\tdelta'((\qpsi,\state'),a)=\bigou_{a'\in 2^{\max(\psi)}} \left( \tdelta\left((\qpsi,\state'),a'\right)\et \biget_{\phi_i\in
a'}\tdelta^{i}_{\state'}(\tq^{i}_{\state'},a)\et \biget_{\phi_i\notin a'}\compl{\delta^{i}_{\state'}}(\compl{\tq^{i}_{\state'}},a)\right).\]

\item[$\phi=\existsp{\obs}\phi'$:] We
  build automaton $\bigauto{\phi'}$ that works on $\Dirtreei[\phi']$-trees;
because $\phi$ is hierarchical, we have that $\obs\subseteq I_{\phi'}$
and we can narrow down $\bigauto{\phi'}$ to work on $\Dirtreei[\obs]$-trees and obtain
$\ATA_{1}\egdef\narrow[{\Dirtreei[\obs]}]{\bigauto{\phi'}}$. By
Theorem~\ref{theo-simulation} we can
nondeterminise it to get $\ATA_{2}$, which by
Theorem~\ref{theo-projection} we can project with respect to
$p$, finally obtaining $\bigauto{\phi}\egdef \proj{\ATA_{2}}$.
\end{description}

We now prove by induction on $\phi$ that the construction is correct. 
In each case, we let $\ltree=(\tree,\lab)$ be an
$(\APq,\Dirtreei[\phi])$-tree rooted in $\projI[{\Dirtreei[\phi]}]{\state}$.
\begin{description}
\item[$\phi=p$:] First, note that $I_{p}=[n]$, so that $\ltree$ is
  rooted in $\projI[{\Dirtreei[p]}]{\state}=\state$. Let us
  consider first the case where
 $p\in\APfree$. By definition of $\bigauto{p}$, we have that
 $\ltree\in\lang(\bigauto{p})$ iff $p\in\labS(\state)$. On the other
 hand, by definition of the merge operation, of the unfolding, and because $\APq$ and $\APfree$ are disjoint, we have
 $\liftI[{[n]}]{}{\ltree}\merge\;\unfold{\state}\models p$ iff
 $p\in\labS(\state)$, and we are done. Now if $p\in\APq$: by
 definition of $\bigauto{p}$, we have $\ltree\in\lang(\bigauto{p})$
 iff $p\in \lab(\state)$; also, by definition of the merge, 
 we have that
 $\liftI[{[n]}]{}{\ltree}\merge\;\unfold{\state}\models p$ iff
 $p\in\lab(\state)$, which concludes.
  \item[$\phi=\neg\phi'$:] trivial.
\item[$\phi=\phi_{1}\ou \phi_{2}$:] We have $\ATA_{i} =
\narrow[{\Dirtreei[\phi]}]{\bigauto{\phi_{i}}}$, so by Theorem~\ref{theo-narrow}
we get that $\ltree\in\lang(\ATA_{i})$ iff
$\liftI[{\Dirtreei[\phi_{i}]}]{}{\ltree}\in\lang(\bigauto{\phi_{i}})$, which by
induction hypothesis holds iff
$\liftI[{[n]}]{}{(\liftI[{\Dirtreei[\phi_{i}]}]{}{\ltree})}\merge\;
\unfold{\state} \modelst \phi_{i}$, \ie, iff
$\liftI[{[n]}]{}{\ltree}\merge\;\unfold{\state}\modelst\phi_{i}$.
We conclude by reminding that
$\lang(\bigauto{\phi})=\lang(\ATA_{1})\union\lang(\ATA_{2})$.

\item[$\phi=\E\psi$:] Suppose that $\ltree'=\liftI[{[n]}]{}{\ltree}\merge\;\unfold{\state}\modelst\E\psi$. There exists a
    path  $\tpath$ starting at the root $\state$ of $\ltree'$ such that $\ltree',\tpath\models\psi$. 
    Again, let $\max(\psi)$ be the
  set of maximal state subformulas of $\phi$, and let
    $w$ be the infinite word over $2^{\max(\psi)}$ that agrees with n
    $\tpath$ on the state formulas in $\max(\psi)$. By definition,
    $\autopsi$ has an accepting execution  on $w$. Now
    in the acceptance game of $\bigauto{\phi}$ on $\ltree$, Eve can guess the
     path $\tpath$, following $\projI[\Dirtree_{\phi}]{\tpath}$ in its
     input $\ltree$, and she can also guess the corresponding word $w$
     on $2^{\max(\psi)}$ and
an accepting execution of $\autopsi$ on $w$. Let
     $\noeud'\in\ltree'$ be a node of $\tpath$, $\state'$ its last direction and
let     $\noeud=\projI[{\Dirtreei[\phi]}]{\noeud'}\in\ltree$. Assume that in node
     $\noeud$ of the input tree, in a state $(\qpsi,\state')\in \tQ$, Adam challenges Eve on some
    $\phi_i\in\max(\psi)$ that she assumes to be true in $\noeud'$, \ie, Adam chooses 
    the conjunct $\tdelta^{i}_{\state'}(\tq^{i}_{\state'},a)$, where $a$
    is the label of $\noeud$. Note that in the evaluation game
    this means that Adam moves to position
    $(\noeud,(\qpsi,\state'),\tdelta^{i}_{\state'}(\tq^{i}_{\state'},a))$.
    We  want to show that Eve wins from this
    position.

    Let $\ltree_{\noeud}$ (resp. $\ltree'_{\noeud'}$) be the subtree
    of $\ltree$ (resp. $\ltree'$) starting in $\noeud$
    (resp. $\noeud'$)\footnote{If $\noeud=w\cdot \dir$, the subtree $\ltree_{\noeud}$ of $\ltree=(\tree,\lab)$ is defined
      as $\ltree_{\noeud}\egdef(\tree_{\noeud},\lab_{\noeud})$ with
      $\tree_{\noeud}=\{\dir\cdot w' \mid w\cdot\dir\cdot w' \in
      \tree\}$, and $\lab_{\noeud}(\dir\cdot w')=\lab(w\cdot\dir\cdot w')$:
      we  remove from each node all directions before $\last(\noeud)$.}. It is enough to show that $\ltree_{\noeud}$ is
    accepted by $\ATA^{i}_{\state'}=\narrow[I_{\phi}]{\bigauto[\state']{\phi_{i}}}$. 
Observe that $\ltree_{\noeud}$ is rooted in the last direction
of $\noeud=\projI[{\Iphi[\phi]}]{\noeud'}$, and since the last
direction of $\noeud'$ is $\state'$ we have that $\ltree_{\noeud}$ is
rooted in $\projI[{\Iphi[\phi]}]{\state'}$.
Let us write
$\ltree''=\liftI[{\Iphi[\phi_i]}]{\state''}{\ltree_{\noeud}}$, where
$\state''=\projI[{\Iphi[\phi_i]}]{\state'}$. 
By Theorem~\ref{theo-narrow}, because
$\ATA^{i}_{\state'}=\narrow[I_{\phi}]{\bigauto[\state']{\phi_{i}}}$
and  $\projI[{\Iphi[\phi]}]{\state'}=\projI[{\Iphi[\phi]}]{(\projI[{\Iphi[\phi_i]}]{\state'})}$,
we have that
\begin{equation}
  \label{eq:3a}
  \ltree_{\noeud}\in\lang(\ATA^i_{\state'}) \mbox{\bigiff} \ltree''\in\lang(\bigauto[\state']{\phi_{i}}).
\end{equation}


Since $\ltree''$ is rooted in    $\projI[{\Iphi[\phi_{i}]}]{\state'}$
we can apply   the induction 
hypothesis on $\ltree''$ with $\phi_{i}$, and we get that
\begin{equation}
  \label{eq:3}
\ltree''\in\lang(\bigauto[\state']{\phi_{i}}) \mbox{\bigiff}
\liftI[{[n]}]{}{\ltree''}\merge\; \unfold{\state'}\modelst \phi_{i}.  
\end{equation}
Now, because $\noeud'$ ends in $\state'$ we also have that
\begin{equation}
  \label{eq:4}
\ltree'_{\noeud'}=\liftI[{[n]}]{}{\ltree''}\merge\; \unfold{\state'}.  
\end{equation}
Putting \eqref{eq:3a}, \eqref{eq:3} and \eqref{eq:4} together, we
obtain that
\begin{equation}
  \label{eq:4bis}
\ltree_{\noeud}\in\lang(\ATA^{i}_{\state'}) \quad\mbox{iff}\quad\ltree'_{\noeud'}\modelst\phi_{i}.
\end{equation}
Because we have assumed that Eve guesses $w$ correctly, we also have
    that $\ltree',\noeud'\modelst\phi_i$, \ie,
    $\ltree'_{\noeud'}\modelst\phi_{i}$. This, together with
    \eqref{eq:4bis}, gives us
 that 
 $\ltree_{\noeud}$
 is accepted by
 $\ATA^{i}_{\state'}$.

 Eve thus has a winning strategy from the initial position of
    the acceptance game of  $\ATA^{i}_{\state'}$ on $\ltree_{\noeud}$.
    This initial position is
    $(\noeud,\tq^i_{\state'},\tdelta^i_{\state'}(\tq^i_{\state'},a))$. Since
    $(\noeud,\tq^i_{\state'},\tdelta^i_{\state'}(\tq^i_{\state'},a))$ and
    $(\noeud,(\qpsi,\state'),\tdelta^i_{\state'}(\tq^i_{\state'},a))$ contain the same node $\noeud$ and
    transition formula $\tdelta^i_{\state'}(\tq^i_{\state'},a)$, a winning
    strategy in one of these
    positions\footnote{Recall that positional strategies are
      sufficient in parity games \cite{DBLP:journals/tcs/Zielonka98}.} is also a winning strategy in the other, and therefore
    Eve  wins Adam's challenge.
With a similar argument, we get that also when Adam challenges Eve on
    some $\phi_i$ assumed not to be true in  node $\noeuda$, Eve wins
    the challenge. Finally,
    Eve wins the acceptance game of
    $\bigauto{\phi}$ on $\ltree$, and thus $\ltree\in\lang(\bigauto{\phi})$.

    For the other direction, assume that
    $\ltree\in\lang(\bigauto{\phi})$, \ie, Eve wins the evaluation game
    of $\bigauto{\phi}$ on $\ltree$. Again, let
    $\ltree'=\liftI[{[n]}]{}{\ltree}\merge\;\unfold{\state}$. A winning strategy for Eve
    describes a path $\tpath$ 
    in $\unfold{\state}$, which is also
 a path in $\ltree'$. This
    winning strategy also defines an infinite word $w$
    over $2^{\max(\psi)}$ such that $w$ agrees with $\tpath$ on the
    formulas in $\max(\psi)$, and it also describes an accepting run
    of $\autopsi$ on $w$. Hence $\ltree',\tpath\modelst\psi$, and
    $\ltree'\modelst \phi$.

  \item[$\phi=\existsp{\obs}\phi'$:] First, observe that because
    $\phi$ is hierarchical, we have that $I_{\phi}=\obs$. Next, by
    Theorem~\ref{theo-projection} we have that
    \begin{equation}
      \label{eq:5}
\ltree\in\lang(\bigauto{\phi}) \mbox{\bigiff there exists }
\ltree_{p}\Pequiv \ltree \mbox{ such that }
    \ltree_{p}\in\lang(\ATA_{2}).        
    \end{equation}
By Theorem~\ref{theo-simulation},
    $\lang(\ATA_{2})=\lang(\ATA_{1})$, and since $\ATA_{1}
    =\narrow[{\Dirtreei[\obs]}]{\bigauto{\phi'}}=\narrow[{\Dirtreei[\phi]}]{\bigauto{\phi'}}$
    we get by Theorem~\ref{theo-narrow} that
    \begin{equation}
      \label{eq:6}
      \ltree_{p}\in\lang(\ATA_{2}) \mbox{\bigiff}
      \liftI[{\Dirtreei[\phi']}]{\dira}{\ltree_{p}}\in\lang(\bigauto{\phi'}), \mbox{    where $\dira=\projI[(I_{\phi'}\setminus I_{\phi})]{\state}$.}   
    \end{equation}
 Now
    $\ltree_{p}$ and $\ltree$ have the same root,
    $\projI[{\Dirtreei[\phi]}]{\state}$.  The root of
    $\liftI[{\Dirtreei[\phi']}]{\dira}{\ltree_{p}}$ is thus
    $(\projI[{\Dirtreei[\phi]}]{\state},\dira)=\projI[{\Dirtreei[\phi']}]{\state}$,
    and we can
 apply 
the
induction hypothesis on
$\liftI[{\Dirtreei[\phi']}]{\dira}{\ltree_{p}}$ with $\phi'$: 
\begin{equation}
  \label{eq:7}
\liftI[{\Dirtreei[\phi']}]{\dira}{\ltree_{p}}\in\lang(\bigauto{\phi'})
\mbox{\bigiff} \liftI[{[n]}]{}{\liftI[{\Dirtreei[\phi']}]{\dira}{\ltree_{p}}\,}\merge\;\unfold{\state}\modelst
\phi'.  
\end{equation}
Now, with \eqref{eq:5}, \eqref{eq:6} and \eqref{eq:7} together with
the fact that
$\liftI[{[n]}]{}{\liftI[{\Dirtreei[\phi']}]{\dira}{\ltree_{p}}\,}\,=\liftI[{[n]}]{}{\ltree_{p}}$,
we get that
\begin{equation}
  \label{eq:8}
  \ltree\in\lang(\bigauto{\phi}) \mbox{\bigiff there exists }
  \ltree_{p}\Pequiv \ltree \mbox{ such that }
  \liftI[{[n]}]{}{\ltree_{p}}\merge\;\unfold{\state}\modelst\phi'.
\end{equation}
Let us prove that the right-hand side of \eqref{eq:8} holds if and
only if
$\liftI[{[n]}]{}{\ltree}\merge\;\unfold{\state}\modelst\existsp{\obs}\phi'$.
For the first direction, assume that there exists $\ltree_{p}\Pequiv
\ltree \mbox{ such that }
\liftI[{[n]}]{}{\ltree_{p}}\merge\;\unfold{\state}\modelst\phi'$.
First, by definition of the merge, because $p\in\APq$ and
$\APq$ and $\APfree$ are disjoint, the
$p$-labelling of $\liftI[{[n]}]{}{\ltree_{p}}\merge\;\unfold{\state}$
is determined by the $p$-labelling of $\liftI[{[n]}]{}{\ltree_{p}}$,
which by definition of the lift is $\obs$-uniform. In addition it is
clear that
$\liftI[{[n]}]{}{\ltree_{p}}\merge\;\unfold{\state}\Pequiv\liftI[{[n]}]{}{\ltree}\merge\;\unfold{\state}$,
which concludes this direction.

For the other direction, assume that
$\liftI[{[n]}]{}{\ltree}\merge\;\unfold{\state}\modelst\existsp{\obs}\phi'$:
there exists $\ltree'_{p}\Pequiv
\liftI[{[n]}]{}{\ltree}\merge\;\unfold{\state}$ such that
$\ltree'_{p}$ is $\obs$-uniform in $p$ and
$\ltree'_{p}\modelst\phi'$. Let us write
$\ltree'_{p}=(\tree',\lab'_{p})$ and $\ltree=(\tree,\lab)$. We define
$\ltree_{p}\egdef(\tree,\lab_{p})$ where for each $\noeud\in\tree$,
if there exists $\noeud'\in\tree'$ such that
$\projI[\obs]{\noeud'}=\noeud$, we let
\[\lab_{p}(\noeud)=
\begin{cases}
  \lab(\noeud)\union\{p\} & \mbox{if }p\in \lab'_{p}(\noeud')\\
  \lab(\noeud)\setminus\{p\} & \mbox{otherwise}.
\end{cases}
\]
This is well defined because $\ltree'_{p}$ is $\obs$-uniform in $p$:
if two nodes $\noeud',\noeuda'$ project on $\noeud$, we have
$\noeud'\oequivt\noeuda'$ and thus they agree on $p$.
In case there is no $\noeud'\in\tree'$ such that
$\projI[{\Dirtreei[\phi]}]{\noeud'}=\noeud$, we can let
$\lab_{p}(\noeud)=\lab(\noeud)$ as this node disappears in $\liftI[{[n]}]{}{\ltree}\merge\;\unfold{\state}$.
Clearly, $\ltree_{p}\Pequiv\ltree$. Now we write  
$\ltree''_{p}=\liftI[{[n]}]{}{\ltree_{p}}\merge\;\unfold{\state}$ and
we prove that $\ltree''_{p}=\ltree'_{p}$ hence $\ltree''_{p}\modelst\phi'$, which concludes.
It is clear that $\ltree''_{p}$ and $\ltree'_{p}$ have the same domain.
Also, because
$\ltree'_{p}\Pequiv\liftI[{[n]}]{}{\ltree}\merge\;\unfold{\state}$ and
$\ltree''_{p}=\liftI[{[n]}]{}{\ltree_{p}}\merge\;\unfold{\state}$, 
by definition of the merge both agree with $\unfold{\state}$ for all
atomic propositions in $\APfree$. Because $\ltree_{p}\Pequiv\ltree$,
and again by definition of the merge, $\ltree''_{p}$ and $\ltree'_{p}$
also agree on all atomic propositions in $\APq\setminus\{p\}$. Finally, by
definition of $\ltree_{p}$ and because $\ltree_{p}'$ is
$\obs$-uniform in $p$, we get that $\ltree''_{p}$ and
$\ltree'_{p}$  also agree on $p$,  and therefore $\ltree''_{p}=\ltree'_{p}$.
\end{description}
\vspace{-2em}
%
\end{proof}

We can now prove Theorem~\ref{theo-decidable}. Let
$(\CKS,\state)$ be a pointed \CKS, and let $\phi\in\QCTLsih$. By Lemma~\ref{lem-final} one can build an \ATA
$\bigauto{\phi}$ such that for every
 labelled $\Dirtreei[\phi]$-tree $\ltree$ rooted in
 $\projI[{\Dirtreei[\phi]}]{\state}$, it holds that
$ \ltree\in\lang(\bigauto{\phi}) \mbox{ iff }
      \liftI[{[n]}]{}{\ltree}\merge \;\unfold{\state} \modelst
      \phi$.
      Let $\tree$ be the full $\Dirtreei[\phi]$-tree rooted in
      $\projI[{\Dirtreei[\phi]}]{\state}$, and let
      $\ltree=(\tree,\lab_{\emptyset})$,  where $\lab_{\emptyset}$ is
       the empty labelling.
      Clearly,
$\liftI[{[n]}]{}{\ltree}\merge
\;\unfold{\state}=\unfold{\state}$, and because $\ltree$ is
rooted in $\projI[{\Dirtreei[\phi]}]{\state}$, we have
      $ \ltree\in\lang(\bigauto{\phi}) \mbox{ iff }
      \unfold{\state}\modelst \phi$. It only remains to build a simple
       deterministic tree automaton
      $\ATA$ over $\Dirtreei[\phi]$-trees
      such that $\lang(\ATA)=\{\ltree\}$, and check for emptiness of
      the alternating tree automaton
      $\lang(\ATA\cap\bigauto{\phi})$. Because 
      nondeterminisation makes the size of the automaton gain
      one exponential for each nested quantifier on propositions, the
      procedure is nonelementary, and hardness is inherited from the
      model-checking problem for \QCTL~\cite{DBLP:journals/corr/LaroussinieM14}. 


\section{Conclusion and future work}
\label{sec-conclusion}

We have introduced the essence  of imperfect information in \QCTLs, by
adding internal structure to  states of the models
and parameterising propositional quantifiers with observational power
over this internal structure.
We considered both the structure
and tree semantics,  intimately related to the notions of
\emph{no memory} and \emph{perfect recall} in game strategies,
respectively. For the structure semantics we showed that our logic
coincides with \QCTL in expressive power, and thus also with \MSO, and
that the model-checking problem is \PSPACE-complete, as for \QCTL.
For the tree semantics however we showed that our logic is
expressively equivalent to \MSO with equal level, and that its
model-checking problem is thus undecidable. But we established, thanks
to automata techniques made possible by our modelling choices, that
model checking hierarchical formulas is decidable.

  
Several future work
directions await us. 
First it would be interesting to study \QCTLi under the amorphous
semantics, studied by French for \QCTL 
in~\cite{french2001decidability}.
We would also like to investigate fragments with better complexity, as
well as the satisfiability problem for \QCTLi. Then we believe that
there may be interesting connections with Chain Logic with equal
level, a restriction of \MSOeql that is decidable on trees. Does it
correspond to another interesting decidable fragment of \QCTLsi?
Finally, we aim at exploiting our last result in
various logics for strategic reasoning with imperfect information,
such as \ATLSsc and \SL. 

%
%
%


\begin{thebibliography}{10}

\bibitem{EL85}
E.~A.Emerson and C.-L. Lei.
\newblock Modalities for model checking: Branching time strikes back.
\newblock In {\em PL'85}, pages 84--96. {ACM} Press, 1985.

\bibitem{AGJ07}
T.~{\AA}gotnes, V.~Goranko, and W.~Jamroga.
\newblock {Alternating-Time Temporal Logics with Irrevocable Strategies.}
\newblock In {\em TARK'07}, pages 15--24, 2007.

\bibitem{AHK02}
R.~Alur, T.A. Henzinger, and O.~Kupferman.
\newblock {Alternating-Time Temporal Logic.}
\newblock {\em JACM}, 49(5):672--713, 2002.

\bibitem{DBLP:conf/atva/BerwangerMB15}
Dietmar Berwanger, Anup~Basil Mathew, and Marie van~den Bogaard.
\newblock Hierarchical information patterns and distributed strategy synthesis.
\newblock In {\em Automated Technology for Verification and Analysis - 13th
  International Symposium, {ATVA} 2015, Shanghai, China, October 12-15, 2015,
  Proceedings}, pages 378--393, 2015.
\newblock \href {http://dx.doi.org/10.1007/978-3-319-24953-7\_28}
  {\path{doi:10.1007/978-3-319-24953-7\_28}}.

\bibitem{BLLM09}
T.~Brihaye, A.~Da~Costa Lopes, F.~Laroussinie, and N.~Markey.
\newblock {\ATL with Strategy Contexts and Bounded Memory.}
\newblock In {\em LFCS'09}, LNCS 5407, pages 92--106. Springer, 2009.

\bibitem{CLMM14}
P.~{\v{C}}erm{\'a}k, A.~Lomuscio, F.~Mogavero, and A.~Murano.
\newblock {MCMAS-SLK: A Model Checker for the Verification of Strategy Logic
  Specifications.}
\newblock In {\em CAV'14}, LNCS 8559, pages 524--531. Springer, 2014.

\bibitem{CHP10}
K.~Chatterjee, T.A. Henzinger, and N.~Piterman.
\newblock {Strategy Logic.}
\newblock {\em Information and Computation}, 208(6):677--693, 2010.

\bibitem{CE81}
E.M. Clarke and E.A. Emerson.
\newblock {Design and Synthesis of Synchronization Skeletons Using
  Branching-Time Temporal Logic.}
\newblock In {\em LP'81}, LNCS 131, pages 52--71. Springer, 1981.

\bibitem{CGP02}
E.M. Clarke, O.~Grumberg, and D.A. Peled.
\newblock {\em {Model Checking.}}
\newblock MIT Press, 2002.

\bibitem{DLM10}
A.~{Da Costa}, F.~Laroussinie, and N.~Markey.
\newblock {ATL with Strategy Contexts: Expressiveness and Model Checking.}
\newblock In {\em FSTTCS'10}, LIPIcs 8, pages 120--132, 2010.

\bibitem{Dastani10programs-aamas}
M.~Dastani and W.~Jamroga.
\newblock Reasoning about strategies of multi-agent programs.
\newblock In {\em AAMAS'10}, pages 997--1004. {IFAAMAS}, 2010.

\bibitem{CT11}
C.~Dima and F.~L. Tiplea.
\newblock Model-checking {ATL} under imperfect information and perfect recall
  semantics is undecidable.
\newblock {\em CoRR}, abs/1102.4225, 2011.

\bibitem{DT11}
C.~Dima and F.L. Tiplea.
\newblock {Model-checking ATL under Imperfect Information and Perfect Recall
  Semantics is Undecidable.}
\newblock Technical report, arXiv, 2011.

\bibitem{elgot-rabin66}
C.~C. Elgot and M.~O. Rabin.
\newblock Decidability and undecidability of extensions of second (first) order
  theory of (generalized) successor.
\newblock {\em J. Symb. Log.}, 31(2):169--181, 1966.
\newblock URL: \url{http://dx.doi.org/10.2307/2269808}, \href
  {http://dx.doi.org/10.2307/2269808} {\path{doi:10.2307/2269808}}.

\bibitem{emerson1984deciding}
E~Allen Emerson and A~Prasad Sistla.
\newblock Deciding branching time logic.
\newblock In {\em Proceedings of the sixteenth annual ACM symposium on Theory
  of computing}, pages 14--24. ACM, 1984.

\bibitem{EH86}
E.A. Emerson and J.Y. Halpern.
\newblock {``Sometimes'' and ``Not Never'' Revisited: On Branching Versus
  Linear Time.}
\newblock {\em JACM}, 33(1):151--178, 1986.

\bibitem{french2001decidability}
Tim French.
\newblock Decidability of quantifed propositional branching time logics.
\newblock In {\em Australian Joint Conference on Artificial Intelligence},
  pages 165--176. Springer, 2001.

\bibitem{halpern1989complexity}
Joseph~Y Halpern and Moshe~Y Vardi.
\newblock The complexity of reasoning about knowledge and time. i. lower
  bounds.
\newblock {\em Journal of Computer and System Sciences}, 38(1):195--237, 1989.

\bibitem{Jamroga12fairness-lncs}
W.~Jamroga, S.~Mauw, and M.~Melissen.
\newblock Fairness in non-repudiation protocols.
\newblock In {\em Proceedings of {STM'11}}, volume 7170 of {\em LNCS}, pages
  122--139, 2012.

\bibitem{JM14}
W.~Jamroga and A.~Murano.
\newblock {On Module Checking and Strategies.}
\newblock In {\em AAMAS'14}, pages 701--708. IFAAMAS, 2014.

\bibitem{Kremer03fairexchange}
S.~Kremer and J.-F. Raskin.
\newblock A game-based verification of non-repudiation and fair exchange
  protocols.
\newblock {\em Journal of Computer Security}, 11(3), 2003.

\bibitem{kupermann2001synthesizing}
O~Kupermann and M.Y Vardi.
\newblock Synthesizing distributed systems.
\newblock In {\em Logic in Computer Science, 2001. Proceedings. 16th Annual
  IEEE Symposium on}, pages 389--398. IEEE, 2001.

\bibitem{Kup95}
O.~Kupferman.
\newblock Augmenting branching temporal logics with existential quantification
  over atomic propositions.
\newblock In {\em CAV'95}, LNCS 939, pages 325--338. Springer, 1995.

\bibitem{KMTV00}
O.~Kupferman, P.~Madhusudan, P.~S. Thiagarajan, and M.~Y. Vardi.
\newblock Open systems in reactive environments: Control and synthesis.
\newblock In {\em CONCUR'00}, LNCS 1877, pages 92--107. Springer, 2000.

\bibitem{KV97}
O.~Kupferman and M.~Y. Vardi.
\newblock Module checking revisited.
\newblock In {\em CAV'97}, volume 1254 of {\em LNCS}, pages 36--47. Springer,
  {1997}.

\bibitem{KVW00}
O.~Kupferman, M.Y. Vardi, and P.~Wolper.
\newblock {An Automata Theoretic Approach to Branching-Time Model Checking.}
\newblock {\em JACM}, 47(2):312--360, 2000.

\bibitem{KVW01}
O.~Kupferman, M.Y. Vardi, and P.~Wolper.
\newblock {Module Checking.}
\newblock {\em IC}, 164(2):322--344, 2001.

\bibitem{kupferman1999church}
Orna Kupferman and Moshe~Y Vardi.
\newblock Church's problem revisited.
\newblock {\em Bulletin of Symbolic Logic}, pages 245--263, 1999.

\bibitem{DBLP:journals/jacm/KupfermanVW00}
Orna Kupferman, Moshe~Y. Vardi, and Pierre Wolper.
\newblock An automata-theoretic approach to branching-time model checking.
\newblock {\em J. ACM}, 47(2):312--360, 2000.

\bibitem{DBLP:journals/corr/LaroussinieM14}
Fran{\c{c}}ois Laroussinie and Nicolas Markey.
\newblock Quantified {CTL:} expressiveness and complexity.
\newblock {\em Logical Methods in Computer Science}, 10(4), 2014.
\newblock URL: \url{http://dx.doi.org/10.2168/LMCS-10(4:17)2014}, \href
  {http://dx.doi.org/10.2168/LMCS-10(4:17)2014}
  {\path{doi:10.2168/LMCS-10(4:17)2014}}.

\bibitem{lauchli1987monadic}
Hans L{\"a}uchli and Christian Savioz.
\newblock Monadic second order definable relations on the binary tree.
\newblock {\em The Journal of Symbolic Logic}, 52(01):219--226, 1987.

\bibitem{Lomuscio06mcmas}
A.~Lomuscio and F.~Raimondi.
\newblock {MCMAS} : A model checker for multi-agent systems.
\newblock In {\em TACAS'06}, LNCS 4314, pages 450--454, 2006.

\bibitem{LLM10}
A.D.C. Lopes, F.~Laroussinie, and N.~Markey.
\newblock {\ATL with Strategy Contexts: Expressiveness and Model Checking.}
\newblock In {\em FSTTCS'10}, LIPIcs 8, pages 120--132. Leibniz-Zentrum fuer
  Informatik, 2010.

\bibitem{MMPV14}
F.~Mogavero, A.~Murano, G.~Perelli, and M.Y. Vardi.
\newblock {Reasoning About Strategies: On the Model-Checking Problem.}
\newblock {\em TOCL}, 15(4):34:1--42, 2014.

\bibitem{MMV10b}
F.~Mogavero, A.~Murano, and M.Y. Vardi.
\newblock {Reasoning About Strategies.}
\newblock In {\em FSTTCS'10}, LIPIcs 8, pages 133--144. Leibniz-Zentrum fuer
  Informatik, 2010.

\bibitem{MMV10a}
F.~Mogavero, A.~Murano, and M.Y. Vardi.
\newblock {Relentful Strategic Reasoning in Alternating-Time Temporal Logic.}
\newblock In {\em LPAR'10}, LNAI 6355, pages 371--387. Springer, 2010.

\bibitem{DBLP:journals/tcs/MullerS95}
David~E. Muller and Paul~E. Schupp.
\newblock Simulating alternating tree automata by nondeterministic automata:
  New results and new proofs of the theorems of rabin, mcnaughton and safra.
\newblock {\em Theor. Comput. Sci.}, 141(1{\&}2):69--107, 1995.

\bibitem{peterson2001lower}
Gary Peterson, John Reif, and Salman Azhar.
\newblock Lower bounds for multiplayer noncooperative games of incomplete
  information.
\newblock {\em Computers \& Mathematics with Applications}, 41(7):957--992,
  2001.

\bibitem{peterson2002decision}
Gary Peterson, John Reif, and Salman Azhar.
\newblock Decision algorithms for multiplayer noncooperative games of
  incomplete information.
\newblock {\em Computers \& Mathematics with Applications}, 43(1):179--206,
  2002.

\bibitem{pinchinat2005decidable}
Sophie Pinchinat and St{\'e}phane Riedweg.
\newblock A decidable class of problems for control under partial observation.
\newblock {\em Information Processing Letters}, 95(4):454--460, 2005.

\bibitem{Pnu77}
A.~Pnueli.
\newblock {The Temporal Logic of Programs.}
\newblock In {\em FOCS'77}, pages 46--57. IEEE Computer Society, 1977.

\bibitem{PR89}
A.~Pnueli and R.~Rosner.
\newblock {On the Synthesis of a Reactive Module.}
\newblock In {\em POPL'89}, pages 179--190. Association for Computing
  Machinery, 1989.

\bibitem{rabin1969decidability}
Michael~O Rabin.
\newblock Decidability of second-order theories and automata on infinite trees.
\newblock {\em Transactions of the american Mathematical Society}, 141:1--35,
  1969.

\bibitem{Reif84}
John~H. Reif.
\newblock The complexity of two-player games of incomplete information.
\newblock {\em J. Comput. Syst. Sci.}, 29(2):274--301, 1984.

\bibitem{Sis83}
A.P. Sistla.
\newblock {\em {Theoretical Issues in the Design and Cerification of
  Distributed Systems.}}
\newblock PhD thesis, Harvard University, Cambridge, MA, USA, 1983.

\bibitem{thomas-msoeqlevel}
W.~Thomas.
\newblock Infinite trees and automaton-definable relations over omega-words.
\newblock {\em Theor. Comput. Sci.}, 103(1):143--159, 1992.

\bibitem{Hoek03ATELstudialogica}
W.~van~der Hoek and M.~Wooldridge.
\newblock Cooperation, knowledge and time: {A}lternating-time {T}emporal
  {E}pistemic {L}ogic and its applications.
\newblock {\em Studia Logica}, 75(1):125--157, 2003.

\bibitem{VS85}
M.Y. Vardi and L.J. Stockmeyer.
\newblock Improved upper and lower bounds for modal logics of programs:
  Preliminary report.
\newblock In {\em STOC'85}, pages 240--251, 1985.

\bibitem{DBLP:journals/tcs/Zielonka98}
Wieslaw Zielonka.
\newblock Infinite games on finitely coloured graphs with applications to
  automata on infinite trees.
\newblock {\em Theor. Comput. Sci.}, 200(1-2):135--183, 1998.

\end{thebibliography}



\end{document}
